\newtheorem{definition}{Definition}[section]
\newtheorem{lemma}[definition]{Lemma}
\newtheorem{proposition}[definition]{Proposition}
\newtheorem{theorem}[definition]{Theorem}
\newtheorem{corollary}[definition]{Corollary}
\newtheorem{remark}[definition]{Remark}
\newtheorem{example}[definition]{Example}
\DeclareMathAlphabet{\mathpzc}{OT1}{pzc}{m}{it}
\newcommand{\Comment}[1]{ }
\def\Pred[#1]{~[\,#1\,]}
\newcommand{\cons}{\!:\!}
\newcommand{\congr}{\equiv}
\newcommand{\Orch}{\mbox{\sf Orch}}
\newcommand{\Synth}{\textbf{Synth}}
\newcommand{\SynthL}{\textbf{Synth}^{\!\textbf{fst}}}
\newcommand{\SynthA}{\textbf{Synth}^{\!\textbf{all}}}
\newcommand{\SynthUD}{\textbf{Synth}^{\!\!\textbf{\tiny UD}}}
\newcommand{\io}{\bullet}
\newcommand{\dom}{\textit{dom}}
\newcommand{\oftype}{\rhd}
\newcommand{\bvdash}{\bm{\vdash}\hspace{-5.5pt}\bm{\vdash}}
\newcommand{\List}[1]{[ #1 ]}
\newcommand{\namedorch}[2]{\langle #1 \rangle #2}
\newcommand{\outputtype}[1]{\bm{!}[#1]}
\newcommand{\inputtype}[1]{\bm{?}[#1]}
\newcommand{\branchtype}[1]{ {\bm \with} \hspace{-2pt}\Set{ #1}}
\newcommand{\selectiontype}[1]{\mbox{\large ${\bm \oplus}$}\hspace{-2pt} \Set{ #1 }}
\newcommand{\retrseltype}[1]{{\bm \boxplus} \Set{ #1 }}
\newcommand{\retrselop}[2]{#1 \! \triangleleft [ #2 ]}
\newcommand{\prtyretrseltype}[1]{{\bm \boxplus} \!\, \langle\!\langle #1 \rangle\!\rangle}
\newcommand{\prtyretrselop}[2]{#1 \! \triangleleft \langle\!\langle  #2 \rangle\!\rangle}
\newcommand{\Labels}{{\cal L}}
\newcommand{\orchF}{\mathsf{f}}
\newcommand{\orchG}{\mathsf{g}}
\newcommand{\orchPlus}{\mbox{\small $\sum$}}
\newcommand{\orchOplus}{\oplus}
\newcommand{\orchComply}[3]{#1 : \, #2 \comply #3}
\newcommand{\inact}{\mbox{$\mathbf{0}$}}
\newcommand{\typedaccept}[4]{\mathtt{accept}_{#2} (#3)#4}
\newcommand{\typedrequest}[4]{\mathtt{request}_{#2} (#3)#4}
\newcommand{\send}[2]{#1 ! [#2]}
\newcommand{\receive}[2]{#1 ? (#2)}
\newcommand{\catch}[3]{\mbox{\tt catch\,} #1 (#2).#3}
\newcommand{\throw}[3]{\mbox{\tt throw\,} #1 [#2].#3}
\newcommand{\select}[2]{#1 {\triangleleft\,} #2}
\newcommand{\branch}[2]{#1 \triangleright #2}
\newcommand{\prtysel}[2]{#1\, _{\mathtt{p\hspace{-1pt}r\hspace{-1pt}t\hspace{-1pt}y}}\!\!\triangleleft #2}
\newcommand{\Subst}[2]{\{#1/#2\}}
\newcommand{\IfThenElse}[3]{\mbox{\tt if}\; #1 \; \mbox{\tt then}\; #2 \; \mbox{\tt else} \; #3}
\newcommand{\fc}[1]{\mbox{\sc fc}(#1)}
\newcommand{\comply}{\dashv}
\newcommand{\Dual}[1]{\overline{#1}}
\newcommand{\xs}{\textsf{xs}}
\newcommand{\Set}[1]{\{#1\}}
\newcommand{\lts}[1]{\stackrel{#1}{\longrightarrow}}
\newcommand{\notlts}[1]{\stackrel{#1}{\;\;\not\!\!\longrightarrow}}
\newcommand{\stopf}{\mathfrak{1}}
\newcommand{\stopA}{\mathsf{end}}
\newcommand{\Actdot}{.}
\newcommand{\IF}{\textbf{if}}
\newcommand{\THEN}{\textbf{then}}
\newcommand{\WHERE}{\textbf{where}}
\newcommand{\ELSE}{\textbf{else}}
\newcommand{\AND}{\textbf{and}}
\newcommand{\OR}{\textbf{or}}
\newcommand{\LET}{\textbf{let}}
\newcommand{\FAIL}{\textbf{fail}}
\newcommand{\IN}{\textbf{in}}
\newcommand{\bigfract}[2]{\frac{^{\textstyle #1}}{_{\textstyle #2}}}
\newcommand{\ruleName}[1]{[{\sc #1}]}
\newcommand{\ruleSpace}{\vspace{0em}}
\def \mathax #1#2{\ruleSpace\begin{array}{l}{\mbox{\scriptsize \ruleName{#1}} } \\[-.1cm] #2
\end{array}}
\def \mathrule #1#2#3{\ruleSpace\begin{array}{l}%
    {\mbox{\scriptsize \ruleName{#1}}}
    \\ \bigfract{#2}{#3}
\end{array}}
\title{Session Types for Orchestrated Interactions\footnote{This work was partially 
supported by the COST Action IC1405 on ``Reversible computation - extending horizons of computing'' and
by the COST Action EUTYPES CA-15123.
The authors were partially supported also
by, respectively,
Project FIR 1B8C1 of the University of Catania and
Project FORMS 2015 of the University of Torino.
}}
\author{Franco Barbanera
\institute{Dipartimento di Matematica e Informatica\\
University of Catania}
\email{barba@dmi.unict.it}
 \and 
Ugo de'Liguoro
\institute{Dipartimento di Informatica\\
University of Torino}
\email{ugo.deliguoro@unito.it}
}
\begin{document}

\setlength{\abovedisplayskip}{6pt}
\setlength{\belowdisplayskip}{\abovedisplayskip}

\maketitle

\begin{abstract}
%\vspace{-15mm}
In the setting of the $\pi$-calculus with binary sessions, we aim at relaxing the notion of duality of session types by 
the concept of {\em retractable compliance} developed in contract theory. 
This leads to extending session types with a new type operator of
``speculative selection" including choices not necessarily
offered by a compliant partner. We address the problem of  selecting successful communicating branches 
by means of an operational semantics based on orchestrators, which has been shown to be equivalent to 
the retractable semantics of contracts, but clearly more feasible.
A type system, sound with respect to such a semantics, is hence provided.
% We finally discuss the  issue of subtyping in the new setting.
\end{abstract}

%!TEX root = EPTCS-orchSessions-Main.tex

\section{Introduction}
\label{sect:Intro}

Contracts \cite{LP07,LP08,CGP10,BH16} and session types \cite{HondaVK98,YoshidaV07} are both intended as abstractions representing interaction protocols among
concurrent processes. In both theories, interaction is modeled by message exchange along channels, abstractly represented by input/output actions indexed over channel names. Also the used formalisms all stem from CCS and its variants, possibly extended by some value passing mechanism. The resemblance is even tighter when considering ``session contracts'' \cite{BdL10,BH16}, where only internal and external choices among contracts prefixed by pairwise distinct output viz. input actions are respectively allowed.

{\bf Contracts versus Session Types.}
In spite of similarities, these theories stem from rather different concepts. 
In case of contracts, the input/output behaviour of a participant to a conversation is formalised as whole by a term of an appropriate process algebra; contract theory then focuses on the ``compliance'' relation, holding when two or more protocols are such that, whenever there is an action by a participant that is expected to be performed, the symmetric namely dual one is made available by some other participant. Restricting to the binary case, we say that a ``server'' protocol is compliant with a ``client'' one if all actions issued by the latter are matched by the respective co-actions by the former, possibly until the client reaches a successful state.

Session types are a type system for a dialect of Milner's $\pi$-calculus. Like with typed $\pi$-calculus, judgments associate to a process a ``typing'' that pairs channel names with the types of the values that can be transmitted through the channels; since $\pi$-terms are allowed to communicate channel names as well, channel types are among the types of exchanged values. Differently from ordinary $\pi$-calculus types, session types are regular trees of value types, that can be session types as well. In this way a single type can describe the flow of data through each channel that do not need to have all the same type; also input/output communication actions are distinguished by their types. When a ``session'' is opened by two processes in parallel, a new private channel is created - the session channel - that is shared by the processes; if the two ``end-points'', namely the respective occurrences of the session channel in the participant processes, are typed by {\em dual} types - roughly interchanging input and output types - then the interaction will be error free at run-time. Observe that not only the typing has to be checked against the process structure, which is not considered in contract theory, but also the very same process can issue several sessions at the same time, and session channels can be exchanged among processes. Therefore, even in the simpler setting of binary sessions, namely with channel names connecting two processes at a time, difficulties arise from the possible nesting of several sessions, and their ability to communicate across the boundary of a single session. 

{\bf Orchestrated Compliance.}
Compliance being a rather restrictive requirement, more liberal constraints have been proposed in the literature, among which are ``orchestrated'' compliance \cite{Padovani10} and ``retractable'' contracts \cite{BDLdL16}. 
According to the orchestrated model the interactions between a client and a server are mediated by a third process - the {\em orchestrator} - ruling interactions by allowing certain actions and co-actions only, possibly buffering a bounded number of messages on both sides. In the retractable model instead, actions are classified into irretractable (unaffectible) and retractable (affectible) ones. The concept is that, while irretractable actions by a participant have to be matched with their duals by the other compliant participant, retractable actions are just tried and possibly retracted, in case of communication failure, to issue some other action instead. Although these two models are different, it has been shown in \cite{BdL16} that, by restricting to certain orchestrators that allow just synchronous  communications, contracts that are deemed compliant in both models are the same. Moreover, it is possible to provide an algorithm which synthesizes an orchestrator out of two retractable contracts if they are compliant, or fails otherwise (see \cite{BvBdL17} and section \ref{sec:priorities} below).

{\bf Orchestrated interaction instead of duality.}
In this paper we address the issue of adapting the idea of retractable contracts to session types. More precisely we see session types as contracts, and propose to replace the otherwise restrictive notion of type duality by the relation of retractable or equivalently orchestrated compliance. 
%The reason for preferring the latter is that defining compliance of retractable contracts naturally involves backtracking, mimicking the try and error mechanism of selecting a retractable action. This is acceptable when statically checking the compliance relation of contracts or types, but it is
%computationally unfeasible for processes at run time. On the contrary, once types of opposite end-points have been recognized compliant and an orchestrator has been sythesized, the very same orchestrator is a necessary guidance in the interaction on the session channel. 
To better illustrate the point, let us  consider a process $P$ % $\mathsf{P}$ 
in parallel with a system $Q$, that has to choose
 how to interact in a session with $Q$ by selecting one of several alternatives. 
Both $P$ and $Q$ are equipped with specification of their behaviours, so that it is known in advance
that at least one alternative is actually successful, but not necessarily all of them are.
Now there are two possible ways of guaranteeing $P$ % $\mathsf{P}$ 
to successfully complete the interaction with  $Q$:
\begin{itemize}
\item either $P \mid Q$ %$\mathsf{P}$ 
	is run on a computational infrastructure
	that, in case of  a synchronization failure, allows to roll back to some previous choice point $P'\mid Q'$, 
	and to try a different branch of the interaction;
\item %both $P$ and $Q$ are equipped with static descriptions of their behaviours, 
	or, when checking the compliance of the specifications,
	it is statically computed which are the safe choices, if any, {\em before} running $P \mid Q$, so that they can be stored into
	a mediating process.
\end{itemize}
The reason for preferring the second approach is clearly apparent, as it limits the backtracking to  static type checking, while avoiding
it at run-time. In fact, once types of opposite end-points have been recognized compliant
 up to retractability of certain choices (that are kept distinct syntactically from unretractable ones),
and an orchestrator $\orchF$ has been sinthesized, the very same orchestrator can serve as guidance in the interaction on the session channel. 
In particular, by putting the orchestrator $\orchF$ in parallel with the processes holding the end-points, it can be used to drive the proper choices at run time.
If $k$ is the session channel, we write the resulting session by: 
\[(\nu k)(\namedorch{k}\orchF \mid P \mid Q)\]
representing that the interaction over the private session channel $k$ is ruled by $\orchF$.\\
The orchestrator is obviously an abstraction that allows for many different implementations.
Among the possible ones there are:
\begin{enumerate}
\item a communication infrastructure responsible of running the orchestrator;
\item  the two partners agree on an orchestrator that it is used on both sides as a communication interface.
\end{enumerate}
We do not further discuss implementation issues that would lead us outside the scope of the 
present paper.

%Now, in case the session types at the end-points of a session are compliant up to retractability of certain choices (that will be kept distinct syntactically from unretractable ones), an orchestrator $\orchF$ can be constructed witnessing the compliance relation, which can be used as an oracle to make the proper choices at run time by putting the orchestrator in parallel with the processes holding the end-points. 

%It is hence possible to
%automatically produce an (orchestrating) process enabling, at run time, to instruct the processes
%about the right choices to do.  
%
%An investigation of such a scenario and of the two above mentioned approaches have been
%carried on in the quite abstract setting of (binary) {\em session contracts} in  \cite{BDLdL16},
%where the two approaches are proved to be equivalent.\\
%In the present paper we 
%focus on the one which is more realistic and amenable of a feasible actual implementation, namely the latter.
%In particular, we develop a process calculus with session types, both extending the ones developed in papers like \cite{HondaVK98,??,??}. 
%We proceed now to informally describe the peculiarities of our calculus and types.\\

{\bf Speculative selection.}
In session-types formalism the type $\selectiontype{l_1{:}S_1,\ldots,l_n{:}S_n}$ describes the protocol  consisting in selecting the label $l_i$ to be sent as output, and continuing as specified in $S_i$; this is the consequence of an internal choice that is transparent to the other participant in the session, that is expected to be able to react to all $l_1, \ldots , l_n$. The dual is the branching type  $\branchtype{l_1{:}S'_1,\ldots,l_n{:}S'_n}$, expecting a label among $l_1, \ldots , l_n$ as input to continue according to the respective continuation.
To these we add a new type constructor written
$$
\retrseltype{l_1{:}S_1,\ldots,l_n{:}S_n}
$$
that we dub {\em speculative selection type}. The intended meaning of speculative selection is: try selecting labels among $l_1, \ldots , l_n$
until an $l_i$ is found such that $l_i:S'_i$ is in the corresponding branching type, and $S_i$ and $S'_i$ are compliant. Observe that the
speculative selection type has no dual, so that we cannot use the notion of duality in the system.

As a running example suppose that 
a {\em Client} is willing to establish a session with a movie-{\em Provider} and to behave 
on her channel end according to the following session type: 

\medskip
\centerline{
$\mathsf{ClientSess}\ =\ \ \outputtype{\texttt{String}}.\selectiontype{\text{\sc buy}{:}\selectiontype{\text{\sc uhd}{:}\mathsf{S}, \text{\sc hd} {:}\mathsf{S}},\ \text{\sc rent}{:}\retrseltype{\text{\sc uhd}{:}\mathsf{S},\text{\sc hd}{:}\mathsf{S}, \text{\sc sd} {:}\mathsf{S} , \text{\sc ld} {:}\mathsf{S}} }$
}

\medskip
\noindent
where
$\mathsf{S}\ =\ \outputtype{\texttt{String}}.\branchtype{\text{\sc ok}{:}\inputtype{\texttt{Url}}, \text{\sc no}{:}\stopA}$. 
Accordingly the client will send on the session channel a login information (an element of the ground type {\tt String}) and then will internally 
decide whether she intends to {\sc buy} or to {\sc rent} a movie.
In the former case, she will further decide whether to buy an ultra-high-definition ($\text{\sc uhd}$) or a high-definition ($\text{\sc hd}$) movie.
In the latter case, instead, it is stated in the speculative selection type
$\retrseltype{\text{\sc uhd}{:}\mathsf{S},\text{\sc hd}{:}\mathsf{S}, \text{\sc sd} {:}\mathsf{S} , \text{\sc ld} {:}\mathsf{S}}$  that she will proceed according to four possible failure-amenable choices: renting an ultra-high-definition ($\text{\sc uhd}$), a high-definition ($\text{\sc hd}$), a standard-definition ($\text{\sc sd}$) or a low-definition ($\text{\sc ld}$) movie.
In all cases, she  will proceed according to type $\mathsf{S}$ by sending the string with the title of the movie, and either
receiving the URL from which the movie can be downloaded, if available (availability corresponds to the reception of {\sc ok}), or ending the 
session if it is not ({\sc no}).

From the previous discussion, if the {\em Client}  behaves on her end-point of the session channel 
according to the type $\mathsf{ClientSess}$, she
can safely interact with the {\em Provider} in case the latter behaves on the other end-point of the session channel according to
the following session type $\mathsf{ProvSess}$:

\medskip
\centerline{
$\mathsf{ProvSess}\ =\ \ \inputtype{\texttt{String}}.\branchtype{\text{\sc buy}{:}\branchtype{\text{\sc uhd}{:}\mathsf{S'}, \text{\sc hd} {:}\mathsf{S'}},\ \text{\sc rent}{:}\branchtype{\text{\sc hd}{:}\inputtype{\texttt{Nat}}.\mathsf{S'}, \text{\sc sd} {:}\mathsf{S'} , \text{\sc ld} {:}\mathsf{S'}} }$
}

\medskip\noindent
where
$\mathsf{S'}\ =\ \inputtype{\texttt{String}}.\selectiontype{\text{\sc ok}{:}\outputtype{\texttt{URL}}, \text{\sc no}{:}\stopA}$.

Now this client/server interaction succeeds in case the client actually buys a movie.
If the client intends to rent a movie, instead,  the choice {\sc uhd} will produce a synchronization failure, 
since no ultra-high-definition movies are for rent on that server. Also the choice {\sc hd} of
a movie to be rent would produce a syncronization failure, since then
the server requires a membership code 
(as described by $\text{\sc hd}{:}\inputtype{\texttt{Nat}}.\mathsf{S'}$). 
But this is not all the story, as the remaining two possibilities still lead to success, so that we insist that these two participants can agree at least
in part. Indeed to prevent synchronization failures the  {\em Client} will be instructed at run-time by the orchestrator to select either 
{\sc sd} or {\sc ld} choice only while renting a movie. The orchestrator is computed
when the session between the client and the server is tried and possibly opened, hence before it is started.
An orchestrator for the example is:

\medskip
\centerline{
$\mathsf{f}\ =\ \io .(\,(\text{\sc buy}.(\text{\sc uhd}.\mathsf{f'}+\text{\sc hd}.\mathsf{f'}))+(\text{\sc rent}.(\text{\sc sd}.\mathsf{f'}\oplus\text{\sc ld}.\mathsf{f'}))\,)$
}

\medskip\noindent
where $\mathsf{f'}\ =\ \io .(\,(\text{\sc ok}.\io)+\text{\sc no}\,)$.
This means that on the session channel between {\em Client} and {\em Provider}, the orchestrator $\mathsf{f}$ first enables an input/output interaction ( $\io$ ) and then either ($+$) a  $\text{\sc buy}$ or a $\text{\sc rent}$ branching.
In case of $\text{\sc rent}$, the orchestrator internally decides ($\oplus$) to force
either an $\text{\sc sd}$ or $\text{\sc ld}$ choice; this can be left open as both of them are ``safe'', namely do not
lead to (not even future) synchronization failures.

Observe that also the following two orchestrators can successfully drive the interaction
between {\em Client} and {\em Provider}:
\[\mathsf{f_1}\ =\ \io .(\,(\text{\sc buy}.(\text{\sc uhd}.\mathsf{f'}+\text{\sc hd}.\mathsf{f'}))+\text{\sc rent}.\text{\sc sd}.\mathsf{f'}\,), \quad
\mathsf{f_2}\ =\ \io .(\,(\text{\sc buy}.(\text{\sc uhd}.\mathsf{f'}+\text{\sc hd}.\mathsf{f'}))+\text{\sc rent}.\text{\sc ld}.\mathsf{f'}\,).\]
As a matter of fact, our calculus and its type system do consider any safe orchestrator for a session interaction. 
In actual implementations, however, one could be interested in limiting the nondeterminism in a session exclusively to that
exposed by the two partners. This involves considering {\em deterministic} orchestrators only,
like $\mathsf{f_1}$ and $\mathsf{f_2}$ above, that is with no $\oplus$ inside.

{\bf Adding priorities.}
For the semantics of the type 
$\retrseltype{\text{\sc uhd}{:}\mathsf{S},\text{\sc hd}{:}\mathsf{S}, \text{\sc sd} {:}\mathsf{S} , \text{\sc ld} {:}\mathsf{S}}$ the actual ordering of the labels is immaterial; therefore the actual choice stored in the orchestrator is up to the synthesis algorithm.
In practice, however, the choice of either $\mathsf{f_1}$ or $\mathsf{f_2}$ should not be randomly determined, rather it might reflect a particular policy representable at type level. A simple way is to provide a priority ordering among the failure-amenable choices, that can be expressed by
the following modified syntax:

\medskip
\centerline{
$\prtyretrseltype{\text{\sc ld}{:}\mathsf{S},\text{\sc sd}{:}\mathsf{S}, \text{\sc hd} {:}\mathsf{S} , \text{\sc uhd} {:}\mathsf{S}}$
}

\medskip\noindent
where $\!\, \langle\!\langle \hspace{1mm} \rangle\!\rangle$ is now an ordered list.
So the above type expresses that the option \text{\sc ld} is the one liked best, but in case of failure, \text{\sc sd} is the second preferred choice, and so on.  The priorities represented by the above speculative selection type force  the orchestrator $\mathsf{f_2}$ to
be synthesized.

We shall show that it is possible to synthesize the 
deterministic orchestrator which reflects the priorities described in the speculative selection types.
Moreover,
with an extra computational effort,
the priority-ordering policy can be lowered down and described at process level. 

%As pointed out in \cite{BDLdL16}, there exists no notion of
%dual
%%\footnote{A clear notion of duality for session contracts could be recovered by extending the formalism of retractable contracts of \cite{BDLdL16}, as done in \cite{TR}. Such extension, however, do not seem to have a clear session-type counterpart.} 
%behaviours in a setting like the present one. However this is harmless, since the 
%feasibility of the interactions between two processe like $\mathsf{ClientSess}$ and $\mathsf{ProvSess}$ is guaranteed by the more general notion of compliance between session types, an adaptation  (and an higher-order version) of the compliance relation 
%on retractable session contracts defined in \cite{BDLdL16}, which in turn extends the compliance
%on contracts and session-contracts investigated in 
%papers like \cite{LP07,LP08,CGP10,BH16}.
%Our compliance relation $\comply$ will be used in the type system which guarantees, for 
%typeable processes:\\
%$\bullet$ the possibility of
%avoiding, by means of suitable orchestrators, unsafe choices when speculative selection 
%is concerned;\\
%$\bullet$
%that two actions, on different ends of the same session channel, do syncronize and that
%such synchronization is enabled by the orchestrator associated to the channel.\\

According to the compliance relation, in a session only the client's communicating actions are guaranteed to be
matched by corresponding actions on the server's side. Therefore there might be some pending communications
on the server side even in case the client has successfully completed. 
This fact produces particular deadlocks which do not show up in ordinary session-based calculi and type systems, as these are
based on the notion of duality.
We show how to get rid of such stuck states by adding suitable reduction rules.
The type system can in fact be proved to be sound w.r.t. the new reductions, preventing typed process from reaching these peculiar stuck states.

In order to focus on the main concepts, in the present paper we do not treat recursion, that can be easily added in a fairly standard way.\\

{\bf Structure of the paper.} In Section \ref{sec:orchComply} we define types, orchestrators and the
relation of orchestrated compliance. The syntax of the calculus and the type system are treated
in Section \ref{sec:calculusTypes}. In Section \ref{sec:opsem} the operational semantics is defined
and the subject reduction property is proved, obtaining error freeness of typed processes as a corollary.
The particular deadlocks due to the use of the compliance relation instead of  
duality is dealt with in Section \ref{sec:cleanup}, whereas in Section \ref{sec:priorities}
we restrict orchestrators to deterministic ones, and make them implement 
a priority selection policy which can be described either at type or process level. 
Section \ref{sec:conclusions} contains the conclusions and suggests possible extensions.

\section{Session types and orchestrated compliance} \label{sec:orchComply}

First we introduce session types following \cite{HondaVK98} but for recursion (omitted for sake of simplicity) and for the new type
$\retrseltype{l_i{:}S_i}_{i\in I}$ for speculative selection, corresponding to retractable choice in \cite{BDLdL16}.

\begin{definition}[Types]
\label{def:types}
%\begin{figure}[h]
%{\small
\[\begin{array}{lcl@{\hspace{6mm}}l@{\hspace{12mm}}l@{\hspace{6mm}}l@{\hspace{4mm}}l}
G &:=& \multicolumn{2}{l}{\mathtt{Nat} ~\mid~\mathtt{Bool}  ~\mid~ \ldots}  & & \mathrm{ground\    types}\\[2mm]
S &:=& & & & \mathrm{session\ types} \\ [1mm]
&   & \mid~ \stopA  & \text{terminated}& & \\
   &   & \mid~ \inputtype{G}.S &  \text{value input}  &
           \mid~  \outputtype{G}.S & \text{value output}\\
   &   & \mid~ \inputtype{S_1^p}.S_2 & \text{session input} &
           \mid~  \outputtype{S_1^p}.S_2 & \text{session output} %\qquad \qquad (p\in\Set{+,-}) 
\\
   &    &\mid~ \selectiontype{l_i{:}S_i}_{i\in I} & \text{selection}&
           \mid~  \branchtype{l_i{:}S_i}_{i\in I} & \text{branching}\\ 
   &   & \mid~  \retrseltype{l_i{:}S_i}_{i\in I} &\multicolumn{3}{l}{\hspace{-2mm}\text{speculative selection} } \\	[2mm]
%   &   & \mid~  \prtyselectiontype{l_i{:}S_i}_{i\in 1..n} && & \text{priority selection}\\	
%   &   & 
%   	\mid~ X && \mid~ \mu X.S & \text{recursion}
T &:=& G ~\mid~ S^p & & & \textsc{i/o}\ \mathrm{types}
\end{array}
\]
where $p\in\Set{+,-}$.
Moreover, the labels $l_i$'s belong to a countable set of labels $\Labels$ and are pairwise distinct
in branching, selection and speculative-selection types.
%\caption{Retractable session types} \label{fig:type-syntax}
%}
%\end{figure}
\end{definition}

The syntax of {\em orchestrators} is inspired by that in \cite{Padovani10} and \cite{BvBdL17}.
Main differences with respect to \cite{Padovani10} and \cite{BvBdL17} are that we do not
have buffers, here unnecessary because we do not model asynchronous communications;
besides, our orchestrators can introduce some nondeterminism in orchestrated interactions
(see Definition \ref{def:opSem}).

\begin{definition}[Orchestrators]
We define the set $\Orch$ of {\em orchestrators}, ranged over by $\orchF, \orchG, \ldots$, as the terms generated by the following grammar:
 \[ \begin{array}{lrl@{\hspace{4mm}}l@{\hspace{10mm}}l}
%\mu ::= & \io ~ \mid ~ l & \mbox{(orchestrator prefix)}\\[4mm]
\orchF, \orchG & ::= & \stopf  & \mbox{idle}\\
 & \mid & \io.\orchF & \mbox{{\sc i/o} prefix} 
\\
% & \mid & \hoprefix{g}.f & \mbox{(higher-order prefix)} \\
 & \mid  & l.\orchF & \mbox{selection prefix} \\
 & \mid & l.\orchF + l'.\orchG \quad (l \neq l') & \mbox{external choice}\\
 & \mid & l.\orchF \oplus l'.\orchG \quad (l \neq l') & \mbox{internal choice}\\
%& \mid & \prty{l}.f_i& \mbox{(priority prefix)}\\
% & \mid & x & \mbox{(variable)}\\
% & \mid & \rec x.f & \mbox{(recursion)}
\end{array} \] 
where $l,l'\in\Labels$.
%\noindent
%Over orchestrators we define the relation $\subcontrF$ as the least precongruence w.r.t. I/O prefix and  
%disjunction such that:
%\[ \orchPlus_{i\in I}l_i.f_i  \subcontrF \orchPlus_{j\in I\cup J}l_j.f_j\]
\end{definition}
\noindent
We write $\orchPlus_{i\in I}l_i.\orchF_i$ $($resp. $\orchOplus_{i\in I}l_i.\orchF_i)$   for $l_1.\orchF_1 + \cdots + l_n.\orchF_n$ $($resp. $l_1.\orchF_1 \oplus \cdots \oplus l_n.\orchF_n$), where $I = \Set{1, \ldots, n} \neq\emptyset$ and
the $l_i$'s are pairwise distinct. If $I$ is a singleton then $\orchPlus_{i\in I}l_i.\orchF_i$ ($\orchOplus_{i\in I}l_i.\orchF_i$)  is just a selection prefix.

\begin{definition} [Orchestrated compliance]
\label{def:orchCompliance}
The relation $\orchComply{\orchF}{S}{S'}$ among the orchestrator $\orchF$ and session types $S, S'$ is the least one such that:
\begin{enumerate}
\item
\label{def:orchCompliance-i}
 $\stopf : \stopA \comply S$, for any $S$,
%\item $\orchF : \stopA \comply S$, for any $\orchF$ and $S$,
\item if $\ \orchComply{\orchF}{S}{S'}$ then  $\orchComply{\io.\orchF}{ \inputtype{G}.S}{\outputtype{G}.S'}$ and 
	$\orchComply{\io.\orchF }{\outputtype{G}.S}{\inputtype{G}.S'}$ for any $G$,
%\item if $\orchComply{\orchF}{S}{S'}$ then  $\orchComply{\io.\orchF}{ \inputtype{T}.S}{\outputtype{T}.S'}$ and 
%	$\orchComply{\io.\orchF }{\outputtype{S_1^p}.S}{\inputtype{S_2^p}.S'}$ for any $S_1^p, S_2^p$,
\item if $\ \orchComply{\orchF_i}{S_i}{S'_i}$ for all $i \in I$ then, for any set of indexes $J$, \\
	$\orchComply{\orchPlus_{i\in I}l_i.\orchF_i}{\selectiontype{l_i{:}S_i}_{i\in I}} {\branchtype{l_j{:}S'_j}_{j\in I \cup J}}$ and 
	$\orchComply{\orchPlus_{i\in I}l_i.\orchF_i}{\branchtype{l_j{:}S_j}_{j\in I \cup J}}{\selectiontype{l_i{:}S'_i}_{i\in I}}$,
\item \label{def:orchCompliance-iv}
	if  \ $H \subseteq I\cap J$ with $H \neq \emptyset$ and $\orchComply{\orchF_h}{S_h}{S'_h}$ for all $h \in H$ then \\
	$ \orchComply{\orchOplus_{h\in H}l_h.\orchF_h }{\retrseltype{l_i{:}S_i}_{i\in I} }{ \branchtype{l_j{:}S'_j}_{j\in J}}$ and 
	$ \orchComply {\orchOplus_{h\in H}l_h.\orchF_h}{\branchtype{l_j{:}S_j}_{j\in J}}{ \retrseltype{l_i{:}S'_i}_{i\in I}}$.
\end{enumerate}
We say that $S$ and $S'$ are {\bf compliant}, written $S \comply S'$, if $\orchComply{\orchF}{S}{S'}$ for some $\orchF$.
\end{definition}

\begin{proposition} \label{prop:compOrchestrator}
Given $S, S'$ it is decidable whether $S \comply S'$. Moreover if $S \comply S'$ then an orchestrator $\orchF$ such that
$\orchComply{\orchF}{S}{S'}$ is computable.
%\begin{enumerate}
%\item Given $S, S'$ it is decidable whether $f: S \comply S'$ for some $f$.
%\item If $f: S \comply S'$ and $f' \subcontrF f$ then $f':S \comply S'$.
%\item If $f: S \comply S'$ then there exists a maximal $f'$ w.r.t. $\subcontrF$ such that:
%	\[f \subcontr f' \quad \mbox{and} \quad f': S \comply S'.\]
%	Moreover $f'$ is unique up to permutation of disjuncts, and it is effectively computable in terms of 
%	$S,S'$.
%\end{enumerate}	
\end{proposition}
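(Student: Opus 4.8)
The plan is to argue by well-founded recursion on the combined syntactic size of the pair $(S,S')$, exploiting that, recursion being absent from the type grammar, session types are finite terms. First I would observe that the four clauses of Definition~\ref{def:orchCompliance} are essentially syntax-directed: the outermost constructors of $S$ and $S'$ determine which clause, if any, can be the last one applied in a derivation of $\orchComply{\orchF}{S}{S'}$ --- clause~(\ref{def:orchCompliance-i}) handles $S=\stopA$, clause~2 the matching value input/output prefixes, clause~3 the case where one of $S,S'$ is a selection and the other a branching, and clause~(\ref{def:orchCompliance-iv}) the case where one is a speculative selection and the other a branching. Crucially, in every premise of these clauses the session types that recur are proper subterms of $S$ or $S'$, so the natural recursive procedure terminates.

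Concretely I would define a procedure $\mathsf{synth}(S,S')$ returning either an orchestrator or $\mathsf{fail}$, by cases on the outermost constructors. If $S=\stopA$ it returns $\stopf$ (symmetrically on the right, noting that by minimality of the relation $S\comply\stopA$ can hold only when $S=\stopA$). If $S=\inputtype{G}.S_1$ and $S'=\outputtype{G}.S'_1$, or vice versa, it recurses on $(S_1,S'_1)$ and, if that succeeds, prefixes the result with $\io$, else fails; the session-carrying prefixes, if covered by compliance, are treated analogously, additionally comparing the carried types, which are again smaller subterms. If $S=\selectiontype{l_i{:}S_i}_{i\in I}$ and $S'=\branchtype{l_j{:}S'_j}_{j\in K}$ it checks $I\subseteq K$, recurses on each $(S_i,S'_i)$ with $i\in I$, and --- taking $J=K\setminus I$ in clause~3 --- returns $\orchPlus_{i\in I}l_i.\orchF_i$ if all recursive calls succeed, else fails (and symmetrically if the branching is on the left). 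If $S=\retrseltype{l_i{:}S_i}_{i\in I}$ and $S'=\branchtype{l_j{:}S'_j}_{j\in J}$ it forms $H=\Set{h\in I\cap J \mid \mathsf{synth}(S_h,S'_h)\neq\mathsf{fail}}$ and returns $\orchOplus_{h\in H}l_h.\orchF_h$ if $H\neq\emptyset$ (or, if a deterministic witness is wanted, any single $l_h.\orchF_h$ with $h\in H$), else fails (and symmetrically). In all remaining cases it returns $\mathsf{fail}$.

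Then I would prove, by structural induction on $(S,S')$, that $\mathsf{synth}(S,S')$ returns an orchestrator $\orchF$ (rather than $\mathsf{fail}$) if and only if $S\comply S'$, and that in this case the returned $\orchF$ actually satisfies $\orchComply{\orchF}{S}{S'}$. Soundness is immediate: each branch of the procedure instantiates exactly the corresponding clause of Definition~\ref{def:orchCompliance} with the orchestrators returned by the recursive calls. For completeness, since $\comply$ is the least relation closed under the four clauses, $S\comply S'$ holds iff it admits a finite derivation; inverting its last rule and applying the induction hypothesis to the premises --- whose subjects are proper subterms of $S$ or $S'$ --- shows that $\mathsf{synth}$ succeeds. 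Decidability of $S\comply S'$ then follows because $\mathsf{synth}$ is an effective, terminating procedure, and the ``moreover'' part is precisely its output.

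I do not expect a genuine obstacle: with no recursion the termination of $\mathsf{synth}$ is trivial and $\comply$ is finitely branching, so the whole argument reduces to a careful case analysis. The only points needing attention are making the nondeterminism of clauses~3 and~(\ref{def:orchCompliance-iv}) explicit --- observing that the arbitrary index set $J$ in clause~3 never constrains the recursive premises, so it can be fixed canonically as $K\setminus I$, and that in clause~(\ref{def:orchCompliance-iv}) it suffices to collect the locally compliant branches (or even just one of them) into $H$ --- together with the routine bookkeeping imposed by the symmetric cases.
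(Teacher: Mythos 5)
Your proposal is correct and follows essentially the same route as the paper, whose proof is precisely a structural induction over $S,S'$ (stated without details); your syntax-directed $\mathsf{synth}$ procedure is in substance the algorithm $\Synth$ that the paper itself later exhibits in Section~\ref{sec:priorities}. The extra care you take with the side index set $J$ in clause~3 and the collected set $H$ in clause~\ref{def:orchCompliance-iv} is exactly the bookkeeping the paper leaves implicit, so nothing further is needed.
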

\begin{proof} By induction over the structure of $S,S'$.
\end{proof}

\begin{remark} \label{rem:compOrchestrator}
Both Definition \ref{def:orchCompliance} and Proposition \ref{prop:compOrchestrator} easily extend to
the case of (contractive) recursive types and orchestrators
% $\rec x.\orchF$, where $\orchF$ cannot be a variable 
(see \cite{BvBdL17}, where orchestrated compliance is defined for ``session contracts'' instead of types). 
\end{remark}

\begin{remark}
\label{rem:dualityuptosub}
In the theory of session contracts \cite{BdL13,BH16}, compliance does correspond to the composition of duality and subtyping. Such a correspondence does transfer also to session types
in a very general sense, as shown in \cite{BDGK14}. In our setting, even if a relation of subtyping can be obtained out of a restriction of the subcontract relation defined in \cite{BLdL17}, the above mentioned correspondence looks unrealistic since, as    
pointed out in \cite{BDLdL16}, there exists no natural notion of duality\footnote{Duality for retractable session contracts can be immediately recovered by extending the formalism of \cite{BDLdL16} with speculative input choices, as done in \cite{BLdL17}. Such extension, however, do not seem to have a clear session-type counterpart.} for retractable contracts. 
\end{remark}

\begin{example}
{\em 
Extending the running example in the Introduction, we include the
higher-order features of session types.
We assume that the \textit{Client} of the movie-\textit{Provider} has to pay for the buyed/rented movie.
For the payment, if the movie is available, the \textit{Provider} throws to the \textit{Client} a session channel (typed by the session $\mathsf{PAY}$) that can be used to pay using several possible cards.
The new versions of $\mathsf{ClntSess}$ and $\mathsf{ProvSess}$ are now, respectively:\\[1mm]
\centerline{
$\mathsf{ClntSess}\  =\  \outputtype{\texttt{String}}.\selectiontype{\text{\sc buy}{:}\selectiontype{\text{\sc uhd}{:}\mathsf{S}, \text{\sc hd} {:}\mathsf{S}},\ \text{\sc rent}{:}\retrseltype{\text{\sc uhd}{:}\mathsf{S},\text{\sc hd}{:}\mathsf{S}, \text{\sc sd} {:}\mathsf{S} , \text{\sc ld} {:}\mathsf{S}} }
$}
{where} \hspace{1mm}
$\mathsf{S}\  =\  \outputtype{\texttt{String}}.\selectiontype{\text{\sc ok}{:}\inputtype{ \mathsf{PAY}^- }.\inputtype{\texttt{Url}}, \text{\sc no}{:}\stopA}
$\\
and\\
\centerline{
$\mathsf{ProvSess}\ =\ 
 \inputtype{\texttt{String}}.\branchtype{\text{\sc buy}{:}\branchtype{\text{\sc uhd}{:}\mathsf{S'}, \text{\sc hd} {:}\mathsf{S'}},\ \text{\sc rent}{:}\branchtype{\text{\sc hd}{:}\inputtype{\texttt{Nat}}.\mathsf{S'}, \text{\sc sd} {:}\mathsf{S'} , \text{\sc ld} {:}\mathsf{S'}} }
$}
{where} \hspace{1mm}
$\mathsf{S'}\ =\ 
\inputtype{\texttt{String}}.\selectiontype{\text{\sc ok}{:}\outputtype{ \mathsf{PAY}^- }.\outputtype{\texttt{Url}}, \text{\sc no}{:}\stopA}$

Here we assume that the payment always succeedes and the \textit{Client} is not cheating.
%\centerline{( {\small For sake of simplicity we are assuming the payment always to succeed and the \textit{Client} not to cheat.} )}
%\noindent
The safe interaction between \textit{Client} and the \textit{Provider} is guaranteed by 
$\mathsf{ClntSess}\comply\mathsf{ProvSess}$.
In fact, for instance, it can be checked that:

%\medskip
\centerline{$\mathsf{g}:\mathsf{ClntSess}\comply\mathsf{ProvSess}$}

\noindent
where
$
\mathsf{g}\ =\ \io .(\,(\text{\sc buy}.(\text{\sc uhd}.\mathsf{g'}+\text{\sc hd}.\mathsf{g'}))+(\text{\sc rent}.(\text{\sc sd}.\mathsf{g'}\oplus\text{\sc ld}.\mathsf{g'}))\,)
$
with $\mathsf{g'}\ =\ \io .(\,(\text{\sc ok}.\io.\io)+\text{\sc no}\,)$.

For the payment the \textit{Provider} establishes a session with the \textit{Bank}, hence acting as a client. The \textit{Provider}'s end of the corresponding session
channel is typed by:\\[1mm]
\centerline{
$ \mathsf{bankCustSess}\ 
  =\   \outputtype{\texttt{Amount}}.\mathsf{PAY}$
}
{where}\hspace{1mm}
$\mathsf{PAY}\ =\   \retrseltype{\text{\sc Diners}{:}\outputtype{\mathtt{ccNumber}} ,\text{\sc M\!card}{:}\outputtype{\mathtt{ccNumber}}, \text{\sc Visa}{:}\outputtype{\mathtt{ccNumber}} }
$\\
(notice that $\mathsf{PAY}$ is precisely the type of the channel-end that the \textit{Client}
receives from the movie-\textit{Provider} during the interaction described by $\mathsf{ClntSess}$).

Hence, once the \textit{Provider}-\textit{Bank} session is established, the \textit{Provider} sends  the cost of the movie and then delegates the actual payment to the \textit{Client}.
The polarity `$-$' in the channel-send of $\mathsf{ProvSess}$ indicates that the delegated channel's end is sent by the applicant of the session. Therefore the receiver has to act as such
\footnote{The polarities of two corresponding channel-send and channel-receive are always the same, since they both refer to the same channel's end; in particular, in the present example the polarity is `$-$' also in the channel-receive of $\mathsf{ClntSess}$.}. \\
On the delegated channel the \textit{Client} is allowed to pay either by using a {\sc Diners}, a {\sc Mastercard} or a {\sc Visa}, but it is not guaranteed that all of them will be actually available.

We also assume that once the \textit{Provider}-\textit{Bank} session is established, the \textit{Bank} behaves on its channel's end according to the type \\[1mm]
\centerline{
$ \mathsf{BankSess}\  =\  
\inputtype{\texttt{Amount}}.\branchtype{\text{\sc Discover}{:}\mathsf{bS} ,\text{\sc M\!card}{:}\mathsf{bS} ,
                                 \text{\sc Visa}{:}\mathsf{bS} ,\text{\sc \!A\!Expr}{:}\mathsf{bS} }$
}
{where}
$\mathsf{bS}\ =\  \inputtype{\mathtt{ccNumber}}.\outputtype{\mathtt{TransIDnum}}$.
Observe that the \textit{Bank} can accept any of the $\text{\sc Discover}$, $\text{\sc M\!card}$,
                                 $\text{\sc Visa}$ or $\text{\sc \!A\!Expr}$ cards, but does not accept
$\text{\sc Diners}$. Moreover, after receiving the credit-card number, the \textit{Bank} does issue the identifier of the transaction
(an element of the ground type \texttt{TransIDnum}), notwhithstanding it is not requested by
the $\mathsf{bankCustSess}$ session type. Nonetheless the safeness of the interaction with the \textit{Bank}
is guaranteed by the fact that %\linebreak
{$\ \mathsf{bankCustSess} \comply \mathsf{BankSess}$}.
In fact, for instance, it can be checked that\\[1mm]
\centerline{$\mathsf{h}:\mathsf{bankCustSess}\comply\mathsf{BankSess}$}
where
$
\mathsf{h}\ =\ \io .(\,(\text{\sc M\!card}.\io) \oplus (\text{\sc Visa}.\io)\,).
$

}
\end{example}

\section{Calculus and type assignment} \label{sec:calculusTypes}

We assume to have a countable set ${\cal K}$ of {\em channel names}, ranged over by $k,k',\ldots$
In a session, we refer to the process performing the session-opening {\tt request} as ``the client'', while the process which {\tt accept}s it is referred to as ``the server''.

We distinguish among user-defined and run-time processes. While the former represent the code of concurrent programs, the latter
formalise the state of the system at run-time.
Following \cite{GH05, YoshidaV07}, clients' and servers' channel-ends
are identified by means of polarities $-$ and $+$ respectively, ranged over by $p,q,\ldots$ Moreover,
we define $k^{\Dual{+}} = k^-$ and $k^{\Dual{-}} = k^+$.

\begin{definition}[Processes]
\label{def:processes}
The set of (user-defined) \emph{processes} is defined as the set of the \emph{closed} expressions generated by the following 
grammar: 
\[\begin{array}{lcl@{\hspace{5mm}}l@{\hspace{12mm}}l@{\hspace{5mm}}l}
P, Q&:=& \mid ~ \inact &\mbox{terminated} & \mid ~ (\ P\ \mid\ Q\  ) &   \mbox{parallel} \\[.5mm]
       &     & \mid ~\typedrequest{a}{S}{k}{P}  & \mbox{session request} &
       \mid ~\typedaccept{a}{S}{k}{P}& \mbox{session accept}\\[.5mm]
       &     & \mid ~\send{k}{e}.P & \mbox{value send} &      
       \mid ~\receive{k}{x}.P& \mbox{value receive}\\[.5mm]
 &     & \mid ~\throw{k}{k'}{P} & \mbox{channel send} &      
       \mid ~\catch{k}{k'}{P}& \mbox{channel receive}\\[.5mm]

       &     & \mid ~\select{k}{l}.P &      \mbox{selection} &
       \mid ~\branch{k}{\Set{l_i{:}P_i}_{i\in I}}& \mbox{branching}\\[.5mm]
&     & \mid ~\retrselop{k}{l_i{:}P_i}_{i\in I} & \multicolumn{3}{l}{\!\!\! \mbox{speculative selection}} 
%	& \\[.5mm]
%       \mid ~\singlebranch{k}{[l{:}P]}& \mbox{single-branching}\\[.5mm]
%       &      & \mid ~\IfThenElse{e}{P}{Q} & \mbox{conditional}
% &     \mid ~  (\nu a)P & \mbox{port restriction}
\end{array}
\]
where $k,k'\in{\cal K}$, $I$ is non-empty and finite, and the labels in branching and speculative selection, all belonging to $\Labels$ (as the one in selection), are pairwise distinct.

Let $P$ be any user defined process. {\em Run-time processes} are defined by the grammar
\[\begin{array}{lll}
R & ::= (\nu k)(\namedorch{k}{\orchF} \mid P) ~\mid~ (\nu k)(\namedorch{k}{\orchF} \mid R)  ~\mid~ (P \mid R) ~\mid~ (R \mid P) ~\mid~ (R\mid R')
%& (P \mid Q) ~\mid~ (P \mid R) ~\mid~ (R \mid P) ~\mid~ (\nu k)(\namedorch{k}{\orchF} \mid R)  
	%\mid (\nu k)(\namedorch{k}{\orchF} \mid P \mid R) \mid (\nu k)(\namedorch{k}{\orchF} \mid R \mid P) \mid (R_1 \mid R_2)
\end{array}\]
where channels can be {\em polarized channels}, i.e. channel names can be decorated with polarities.\\
In $(\nu k)(\namedorch{k}{\orchF} \mid P)$ and $(\nu k)(\namedorch{k}{\orchF} \mid R)$ the operator $(\nu k)$ is a binder of $k^p$, $k^{\Dual{p}}$ and $k$. 
%with scope $P$ and $R$ respectively. 
%and similarly for run-time processes of either forms. %We extend consistently the definition of $\fc{P}$ to $\fc{R}$.

%The {\em run-time syntax} of processes extends the above grammar with {\em polarized channels}, i.e. channel names decorated with polarities,
%and with the following productions:
%\[\begin{array}{lcl@{\hspace{4mm}}l@{\hspace{6mm}}l@{\hspace{4mm}}l}
%R &:=& \mid ~  \namedorch{k}{\orchF} & \mbox{$k$-named orchestrator}\\[.5mm]
%       &     & \mid ~(\nu k)P &      \mbox{hiding session}\\[.5mm]
%       &     & \mid ~R \mid R &      \mbox{parallel composition}\\[.5mm]
%\end{array}
%\]
%We call {\bf initial system} any parallel composition of user-defined closed processes. \\
%We say that the run-time process $\namedorch{k}{\mathsf{f}}$ is a {\em $k$-named} orchestrator.
\end{definition}

%\noindent
In session type systems the compatibility of processes establishing a session lays on the notion of duality of the respective types,
that are associated to a port name which is the same on both sides. 
To duality we have replaced the notion of compliance. It is hence not straightforward to use port names for establishing a session between a client and a suitable server, since there is not 
a unique type for the possible servers of a given client.
So, in order to get a calculus as general and simple as possible,  we equip the {\tt request} and {\tt accept} operators with their types,  without any reference to port names.

%Like in contract theory, we figure the scenario of a client looking for a server on a network. This could be seen as a handshaking phase, where what matters are actually the types, instead of the port name. 
%As a consequence we do not use port names at all, but equip the operators for session-opening request and accept with their types.

Most process actions are similar to those of the calculus in \cite{HondaVK98},
but for the new {\em speculative selection}. A process $\retrselop{k}{l_i{:}P_i}_{i\in I}$
is able to send on channel $k$ any of the labels $l_i$'s and to proceed afterwards as $P_i$.
The process is aware that some of the synchronizations on the $l_i$'s could led to a failure
and an orchestrator is hence expected to drive the choice.\\

The processes \textit{Provider}, \textit{Client} and \textit{Bank} of the running example
can be described as follows (where also conditional processes are used and where only
the parts concerning rental are described).

\begin{example}{\em
\label{ex:processes}
Let \textbf{b} be a boolean expression representing the decision of whether buying or renting
a movie.\\[-2mm]
\centerline{
$\textit{Client}\ =\  \typedrequest{a}{\mathsf{ClntSess}}{k}{
\send{k}{\textrm{loginfo}}.
\IfThenElse{\textbf{b}}{\select{k}{\text{\sc buy}}.\textit{C}_{\text{\sc b}}}{\select{k}{\text{\sc rent}}.\textit{C}_{\text{\sc r}}}
}$
}
{where}\\
$
\textit{C}_{\text{\sc r}}\ =\ 
\retrselop{k}{\text{\sc uhd}.\textit{C}', \text{\sc hd}.\textit{C}',\text{\sc sd}.\textit{C}', \text{\sc ld}.\textit{P}' }$
\hspace{10mm}
$\textit{C}' \ =\ 
\send{k}{''zootropolis''}.\branch{k}{\Set{\text{\sc ok}{:\textit{C}'_{\text{\sc ok}} 
,\text{\sc no}{:}\inact }}}$\\
$\textit{C}'_{\text{\sc ok}}\ =\ \catch{k}{k'}
\retrselop{k'}{\text{\sc Diners}{:}Q,\text{\sc M\!card}{:}Q, \text{\sc Visa}{:}Q }$
\hspace{10mm}
$Q \ =\  \send{k'}{1234}.\receive{k}{url}.\textit{Watch}$\\[2mm]
\centerline{
$\textit{Provider}\ =\  \typedaccept{b}{\mathsf{ProvSess}}{k}{
\receive{k}{x}.
\branch{k}{\Set{\text{\sc buy}{:}\textit{P}_{\text{\sc b}},\text{\sc rent}{:}\textit{P}_{\text{\sc r}}}}}$}\\ 
{where}\\
$\textit{P}_{\text{\sc r}} \ = \
\branch{k}{\Set{\text{\sc uhd}.\textit{P}', \text{\sc hd}.\textit{P}',\text{\sc sd}.\textit{P}', \text{\sc ld}.\textit{P}'}}$
\hspace{8mm}
$\textit{P}' \ = \ 
\receive{k}{y}.\IfThenElse{\text{available(y)}}{\select{k}{\text{\sc ok}}.\textit{P}'_{\text{\sc ok}}} 
{\select{k}{\text{\sc no}}}$\\
$\textit{P}'_{\text{\sc ok}}\ = \
\typedrequest{c}{\mathsf{bnkCustSess}}{k'}{
\send{k'}{\text{amount(y)}}.
\throw{k}{k'} \send{k}{\text{url(y)}}
}$\\[2mm]
\centerline{
$\textit{Bank}\ =\ \typedaccept{d}{\mathsf{BankSess}}{k'}{
\receive{k'}{x}.
\branch{k'}{\Set{\text{\sc Discover}{:}B ,\text{\sc M\!card}{:}B ,
                                 \text{\sc Visa}{:}B ,\text{\sc \!A\!Expr}{:}B}}
}$}\\
{where}
$B  \ = \  \receive{k'}{cc}.\send{k'}{\text{IDtrans}(x,cc)}$
}\end{example}
%\textit{C}_{\text{\sc b}}& = & 
%\IfThenElse{p}{\select{k}{\text{\sc hd}}.\textit{P}_{\text{\sc hd}}}{\select{k}{\text{\sc uhd}}.\textit{P}_{\text{\sc uhd}}}
 
\paragraph{The Type System}
The following type system is the ``more liberal'' system in \cite{YoshidaV07},
where duality is replaced by the relation of orchestrated compliance.
There are two kinds of judgments: the first one is $\Gamma \vdash e:G$, 
where $e$ is an expression of ground type;
% so that $T = G$, or a port name $a$ so that $T = S$, namely a session type; 
the second one is $\Gamma \bvdash P \oftype \Delta$, where $P$ is either a user-defined or a run-time process. The {\em context} $\Gamma$ is a finite set of typings $x:G$ of expression variables;
% or $a:S$ of port names;
 the {\em process-typing} (henceforth just {\em typing}) $\Delta$ is a finite set of typings $k^p:S$, where $k^p$ is a polarised channel name, and $S$ a session type. By $\dom(\Gamma)$ and $\dom(\Delta)$ we mean the set of variables or channel names that are typed, respectively, in $\Gamma$ and $\Delta$ . Variables and (polarized) channel names are pairwise distinct both in $\Gamma$ and $\Delta$ as usual; note that $k^+ \neq k^-$.

\begin{definition} [Type System] 
\label{def:typeSystem}
The rules of the type system are in Figure \ref{fig:typeSystem}.
In rule {\small $[$\textsc{Inact-T}$]$} a typing $\Delta$ is {\em completed} if for any $k^p \in \dom(\Delta)$ we have
$k^p:\stopA \in \Delta$.
In rule  {\small $[$\textsc{Conc-T}$]$} the typing $\Delta\cdot\Delta'$ is the union of the typings $\Delta$ and $\Delta'$ provided that
$\dom(\Delta) \cap \dom(\Delta') = \emptyset$, it is undefined otherwise; in the latter case the rule does not apply.
\end{definition}

The rules of Figure \ref{fig:typeSystem} are similar to those of 
\cite{YoshidaV07}, but for the following. In rules {\small [{\textsc{Acc-T}}]}, {\small [\textsc{Req-T}]} port names are not considered;
consequently contexts $\Gamma$ just contains expression variables and we do not have restrictions over port names. In \cite{YoshidaV07} rule {\small [\textsc{CRes-T}]} is
\[
\mathrule{\textsc{CRes-T}}
	{\Gamma \bvdash P \oftype \Delta \cdot k^-:S \cdot k^+:\Dual{S} }
	{\Gamma \bvdash (\nu k) P  \oftype \Delta}
\]
where $\Dual{S}$ is the dual of $S$. In the premise we have the typing $\Delta \cdot k^-:S \cdot k^+:S'$ instead, with the side condition
$\orchComply{\orchF}{S}{S'}$ yielding the orchestrator appearing in the process in the conclusion.

%\begin{description} 
%\item 
%{\small [{\textsc{Acc-T}}]}, {\small [\textsc{Req-T}]}:  as discussed
%previously, in these rules port names are not considered;
%\item {\small [\textsc{CRes-T}]} {\small [\textsc{CRes-T}']}: the first rule enables to type processes containing orchestrators. Such processes are necessarily restricted since, by
%the operational semantics (see Def. \ref{def:opSem} below), (named) orchestrators are inserted when a new session is opened. 
%The second rule, instead, enables to deal with the structural congruence.
%\end{description}

It is not difficult to check  the parallel composition of the processes of running example
is a typable user-defined process, in particular $\emptyset \bvdash \textit{Provider} \mid \textit{Client} \mid \textit{Bank} \oftype \emptyset$.

\begin{figure}[th]
\hrule
{\small
\[\begin{array}{c@{\quad}c}
%\multicolumn{2}{c}{   }
\mathrule{\textsc{Inact-T}}
	{\Delta \text{ completed }}
	{\Gamma \bvdash  \inact \oftype \Delta}
	&
\mathrule{\textsc{Conc-T}}
	{\Gamma \bvdash P \oftype \Delta \quad \Gamma \bvdash Q \oftype \Delta' }
	{\Gamma \bvdash P \mid Q \oftype \Delta \cdot \Delta'}
%&
%\mathrule{\textsc{Cond-T}}
%	{\Gamma \vdash e : \mathtt{Bool}  \quad \Gamma \bvdash P \oftype \Delta \quad \Gamma 
%	\bvdash Q \oftype \Delta}
%	{\Gamma \bvdash \IfThenElse{e}{P}{Q} \oftype \Delta}
\\ [8mm]
\mathrule{\textsc{Acc-T}}
	{\Gamma \bvdash P\Subst{k^+}{k} \oftype \Delta \cdot k^+:S}
	{\Gamma \bvdash \typedaccept{a}{S}{k}{P} \oftype \Delta}
&
\mathrule{\textsc{Req-T}}
	{\Gamma \bvdash P\Subst{k^-}{k} \oftype \Delta \cdot k^-:S}
	{\Gamma \bvdash \typedrequest{a}{S}{k}{P} \oftype \Delta}
\\ [8mm]
\mathrule{\textsc{Rec-T}}
 	{\Gamma, x:G \bvdash P \oftype \Delta\cdot k^p:S}
	{\Gamma \bvdash \receive{k^p}{x}.P \oftype \Delta\cdot k^p: \, \inputtype{G}.S}
&
\mathrule{\textsc{Send-T}}
	{\Gamma \vdash e:G \quad \Gamma \bvdash P \oftype \Delta\cdot k^p:S}
	{\Gamma \bvdash \send{k^p}{e}.P \oftype \Delta\cdot k^p: \, \outputtype{G}.S}
\\ [8mm]
\mathrule{\textsc{Cat-T}}
	{\Gamma \bvdash  P\Subst{k'^q}{k'} \oftype \Delta\cdot k^p:S_2\cdot {k'}^q:S_1}
	{\Gamma \bvdash   \catch{k^p}{{k'}}{P} \oftype \Delta\cdot k^p: \,\inputtype{S^q_1}.S_2 }
&
\mathrule{\textsc{Thr-T}}
	{\Gamma \bvdash  P \oftype \Delta\cdot k^p:S_2}
	{\Gamma \bvdash   \throw{k^p}{{k'}^q}{P} \oftype \Delta\cdot k^p: \, \outputtype{S^q_1}.S_2 \cdot {k'}^q:S_1}
\\[8mm]
\mathrule{\textsc{Br-T}}
	{\forall i\in I \supseteq J~~\Gamma \bvdash P_i \oftype \Delta\cdot k^p:S_i}
	{\Gamma \bvdash  \branch{k^p}{\Set{l_{i}:P_{i}}_{i\in I}} \oftype \Delta\cdot k^p:\branchtype{l_{j}{:}S_j}_{j\in J}}
&
\mathrule{\textsc{Sel-T}}
	{\Gamma \bvdash P \oftype \Delta\cdot k^p:S_j \quad j\in I}
	{\Gamma \bvdash  \select{k^p}{l_j}.P \oftype \Delta\cdot k^p:\selectiontype{l_i{:}S_i}_{i\in I}} 
\\[8mm]
\multicolumn{2}{c}{
\mathrule{\textsc{SSel-T}}
	{\forall i\in I~~\Gamma \bvdash P_i \oftype \Delta\cdot k^p:S_i}
	{\Gamma \bvdash  \retrselop{k^p}{l_{i}:P_{i}}_{i\in I} \oftype \Delta\cdot k^p:\retrseltype{l_{i}{:}S_i}_{i\in I}}
	}
\\[8mm]
\mathrule{\textsc{CRes-T}}
	{\Gamma \bvdash P \oftype \Delta \cdot k^-:S_1 \cdot k^+:S_2   \qquad \orchComply{\orchF}{S_1}{S_2} }
	{\Gamma \bvdash (\nu k)(\namedorch{k}{\orchF} \mid P ) \oftype \Delta}
&
\mathrule{\textsc{CRes'-T}}	
	{\Gamma \bvdash P \oftype \Delta \qquad k^+,k^- \not \in \dom(\Delta)}
	{\Gamma \bvdash (\nu k) P \oftype \Delta}

%&
%\mathrule{\textsc{PRes-T}}
%	{\Gamma, a:S \bvdash P \oftype \Delta}
%	{\Gamma \bvdash (\nu a)P \oftype \Delta}

\end{array}\]}
\hrule
\caption{The type system.}
\label{fig:typeSystem}
\end{figure}

%!TEX root = EPTCS-orchSessions-Main.tex

\section{Operational semantics and error freeness} \label{sec:opsem}
%\begin{figure}[ht]
%$$
%\begin{array}{c@{\hspace{12mm}}l}
%
%\mathax{\textsc{Scop-Extr-Cong}}{
%(\nu k)P \mid Q ~\congr~ (\nu k)(P \mid Q) \qquad \text{with}\ k\not\in\fc{Q}
%}\\[2mm]
%
%\mathax{\textsc{Par-Cong}}{
%P \mid Q ~\congr~ Q \mid P
%}\\[2mm]
%
%\mathax{\textsc{Assoc-Cong}}{
%(P \mid Q) \mid R ~\congr~ P \mid (Q \mid R) \qquad
%}
%\end{array}
%$$
%(*) where at most one among $P,Q$ and $R$ is a named orchestrator.
%%\begin{itemize}
%%\item $P$ and $Q$ are  both named orchestrators;
%%
%%\item $Q$ and $R$ are  both named orchestrators;
%%
%%\item $P$, $Q$ and $R$ are  all named orchestrators.
%%
%%\end{itemize}
%
%\caption{Congruence relation.}\label{fig:Congruence}
%\end{figure}
%Structural congruence in Figure \ref{fig:Congruence} is similar to ordinary 
%$\pi$ calculus, but it is defined here on run-time processes only. 
Because of the presence of orchestrators in session interactions, the notion of structural congruence
has to be handled with some extra care than in usual calculi with session-types.  

\begin{definition} [Structural Congruence] \label{def:run-time-strCongr}
The {\em structural congruence} $\congr$ is the least congruence over run-time processes %w.r.t. conetexts of the shape $\mid (\nu k)(\namedorch{k}{\orchF} \mid [~])$ 
such that:
\begin{enumerate}
\item $R \congr R'$ if $R'$ is obtained from $R$ by alphabetical change of bound channel names, avoiding name clashes,
\item \label{def:run-time-strCongr-ii}
	$(\nu k)(\namedorch{k}{\orchF} \mid Q \mid (\nu k')(\namedorch{k'}{\orchG} \mid Q' \mid R)) \congr
	(\nu k')(\namedorch{k'}{\orchG} \mid (\nu k)(\namedorch{k}{\orchF} \mid Q \mid Q' )\mid R)$ if $k\not \in \fc{R}, k' \not \in \fc{Q}$,
\item if $X,Y,Z$ are either user-defined or run-time processes that are not a named orchestrator, then:
	$(X \mid Y) \congr (Y \mid X)$ and $(X \mid (Y \mid Z)) \congr ((X \mid Y) \mid Z)$.
\end{enumerate}
%Let $P, Q$ be any user defined processes. {\em Run-time processes} are defined by the grammar
%\[\begin{array}{lll}
%R & ::= & (P \mid Q) ~\mid~ (P \mid R) ~\mid~ (R \mid P) ~\mid~ (\nu k)(\namedorch{k}{\orchF} \mid R)  
%	%\mid (\nu k)(\namedorch{k}{\orchF} \mid P \mid R) \mid (\nu k)(\namedorch{k}{\orchF} \mid R \mid P) \mid (R_1 \mid R_2)
%\end{array}\]
%In $(\nu k)(\namedorch{k}{\orchF} \mid R)$ the operator $(\nu k)$ is a binder of $k^p$ and $k^{\Dual{p}}$ with scope $R$. 
%%and similarly for run-time processes of either forms. %We extend consistently the definition of $\fc{P}$ to $\fc{R}$.
%
%\noindent
%The {\em structural congruence} $\congr$ is the last congruence over run-time processes %w.r.t. conetexts of the shape $\mid (\nu k)(\namedorch{k}{\orchF} \mid [~])$ 
%such that:
%\begin{enumerate}
%\item $R \congr R'$ if $R'$ is obtained from $R$ by alphabetical change of bound channel names, avoiding name clashes,
%\item \label{def:run-time-strCongr-ii}
%	$(\nu k)(\namedorch{k}{\orchF} \mid P \mid (\nu k')(\namedorch{k'}{\orchG} \mid Q \mid R)) \congr
%	(\nu k')(\namedorch{k'}{\orchG} \mid (\nu k)(\namedorch{k}{\orchF} \mid P \mid Q )\mid R)$ if $k\not \in \fc{R}, k' \not \in \fc{P}$,
%\item $(P \mid Q) \congr (Q \mid P)$,  $(P \mid R) \congr (R \mid P)$ and $(P \mid (Q \mid R)) \congr ((P \mid Q) \mid R)$.
%\end{enumerate}
\end{definition}

\begin{definition}[Operational semantics]
\label{def:opSem}
The operational semantics of processes is described by the reduction rules listed in Figure
\ref{fig:opSem}.
\end{definition}

\begin{figure}
\hrule
\[\begin{array}{l@{\hspace{0mm}}l}

 \mathax{\textsc{Link}}{\typedrequest{a}{S}{k}{P} \mid \typedaccept{b}{S'}{k}{Q}  ~\lts{}~  
 (\nu k)(\ \namedorch{k}{\orchF}\ \mid\ P\Subst{k^-}{k}\ \mid\ Q\Subst{k^+}{k}\ )
} & \mbox{if $\ \orchComply{\orchF}{S}{S'}$}
\\[2mm]

 \mathax{\textsc{OrchComm}}{
(\nu k)(\ \namedorch{k}{\io.\orchF}\ \mid\ \send{k^p}{e}.P\ \mid\ \receive{k^{\Dual{p}}}{x}.Q\ )
 ~\lts{}~  
(\nu k)(\ \namedorch{k}{\orchF}\ \mid\ P\ \mid\ Q\Subst{v}{x}\ )
} & \mbox{if $\ e \downarrow v$}
\\[4mm]

\multicolumn{2}{l}{
 \mathax{\textsc{OrchDeleg}}{
(\nu k)(\ \namedorch{k}{\io.\orchF}\ \mid\ \throw{k^p}{{k'}^q}{P}\ \mid\ \catch{k^{\Dual{p}}}{{k'}}{Q}\ )
 ~\lts{}~ 
(\nu k)(\ \namedorch{k}{\orchF}\  \mid P\ \mid Q\Subst{k'^q}{k'}\ )
}
}
\\[4mm]

\mathax{\textsc{OrchSel}}{
(\nu k)(\ \namedorch{k}{\orchPlus_{h\in H}l_h.\orchF_h}\ \mid\ \select{k^p}{l_c}.P\ \mid\ \branch{k^{\Dual{p}}}{\Set{l_i{:}Q_i}_{i\in I}}\ )
 ~\lts{}~
(\nu k)(\ \namedorch{k}{\orchF_c}\  \mid P\  \mid Q_c\ )} & \mbox{if $\ c\in H\cap I$}

\\[3mm]

\mathax{\textsc{OrchSSel}}{
(\nu k)(\ \namedorch{k}{\orchOplus_{h\in H}l_h.\orchF_h}\ \mid\
\retrselop{k^p}{l_j:P_j}_{j\in J}\ \mid\ \branch{k^{\Dual{p}}}{\Set{l_i{:}Q_i}_{i\in I}} \ )
 ~\lts{}~
(\nu k)(\ \namedorch{k}{\orchF_c}\ \mid\ P_c\ \mid\ Q_c\ )} & \mbox{if $\ c\in H{\cap}I{\cap} J$}
\\ [2mm]
% \mathax{\textsc{OrchPrtySel}}{
%\namedorch{k}{l_c.\mathsf{f}} ~\mid~\prtysel{k}{[l_i{:}P_i]_{i\in I}} ~\mid~ \branch{k}{\Set{l_j{:}Q_j}_{j\in J}}
% ~\lts{}~
%\namedorch{k}{\mathsf{f}_c} ~\mid~ P_c ~\mid~ Q_c 
%} & \mbox{if $c \in I \cap J$}
%\\[2mm]
%

%\mathax{\textsc{OrchClnUp1}}{
%\namedorch{k}{\stopf} ~\mid~ \pi.Q ~\mid~ R~\lts{}~ \namedorch{k}{\stopf} ~\mid~ Q ~\mid~ R
%} & \mbox{$\pi$ any I/O prefix}
%%\\ 
%%\multicolumn{1}{r}{\textrm{if~} \pi {\in} \Set{\send{k^+}{e}, \receive{k^+}{x}, \select{k^+}{l}} }
%\\[2mm]
%
%\mathax{\textsc{OrchClnUp2}}{
%\namedorch{k}{\stopf} ~\mid~ \branch{k^+}{\Set{l_i{:}Q_i}_{i\in I}} ~\mid~ R~\lts{}~ \namedorch{k}{\stopf} ~\mid~ Q_c ~\mid~ R
%} & \mbox{for any $c \in I$}
%
%\\[2mm]
%\mathax{\textsc{OrchClnUp3}}{
%\namedorch{k}{\stopf} ~\mid~  \retrselop{k^+}{l_i{:}Q_i}_{i\in I} ~\mid~ R~\lts{}~ \namedorch{k}{\stopf} ~\mid~ Q_c ~\mid~ R
%} & \mbox{for any $c \in I$}
\end{array}\]
\[
\begin{array}{l@{\hspace{8mm}}c@{\hspace{8mm}}r}
\mathrule{\textsc{Par}}{
P ~\lts{}~ P'
}{
P ~\mid~ Q ~\lts{}~ P' ~\mid~ Q}
&
\mathrule{\textsc{Scop}}{
P ~\lts{}~ P'
}{
(\nu k)P  ~\lts{}~ (\nu k)P'}
&
\mathrule{\textsc{Str}}{
Q\equiv P ~\lts{}~ P' \equiv Q'
}{
Q  ~\lts{}~ Q'}
\end{array}
\]
\hrule
\caption{Operational Semantics}\label{fig:opSem}
\end{figure}
This operational semantics extends that in \cite{HondaVK98} by taking into account 
the new operator and the necessity for interactions of being orchestrated.
\begin{description}
\item {\small [\textsc{Link}]} 
If the session type $S$ is compliant with $S'$ by means of the orchestrator $\textsf{f}$, the session-opening request $\typedrequest{a}{S}{k}{P}$ can be accepted by  $\typedaccept{b}{S'}{k}{Q}$. A new channel is created for the opened session. 
The `$-$' end is owned by the process who requested the opening (the client) whereas the
 `$+$' end is owned by the other one (the server). To connect the orchestrator
$\textsf{f}$ to the opened session, $\textsf{f}$ is labelled with
the channel name $k$. This forces its orchestration actions to act only on synchronizations over the channel $k$.
\item {\small[\textsc{OrchComm}]}, {\small[\textsc{OrchDeleg}]}, {\small[\textsc{OrchSel}]} 
In these rules the orchestrator enables the communication of a value,
the communication of a channel, and the selection of a label, respectively.
In rule {\small [\textsc{OrchComm}]}
the side condition $e\! \downarrow\! v$ reads: expression $e$ evaluates to the value $v$.
\item {\small[\textsc{OrchSSel}]}
In presence of a speculative selection, i.e. a number of choices possibly leading to synchronization failures,
the role of the orchestrator is to suggest one among the safe choices.
Notice that, in case the cardinality of $ H\cap I\cap J$ is strictly greater than one, this rule is nondeterministic.
 In actual implementation it is reasonable to expect the orchestrator
not to add nondeterminism to the system. We show in Section \ref{sec:priorities} how this can be obtained by interpreting $[l_j:P_j]_{j\in J}$ in $\retrselop{k^p}{l_j:P_j}_{j\in J}$ as a priority list
and how is it possible to synthesize, out of $S$ and $S'$ in {\small [\textsc{Link}]},
the orchestrator that suggests the safe choice possessing
 the highest priority, if any.
\item {\small [\textsc{Par}]}, {\small [\textsc{Scop}]}, {\small [\textsc{Str}]} These rules are standard.
\end{description}

\begin{figure}[th]
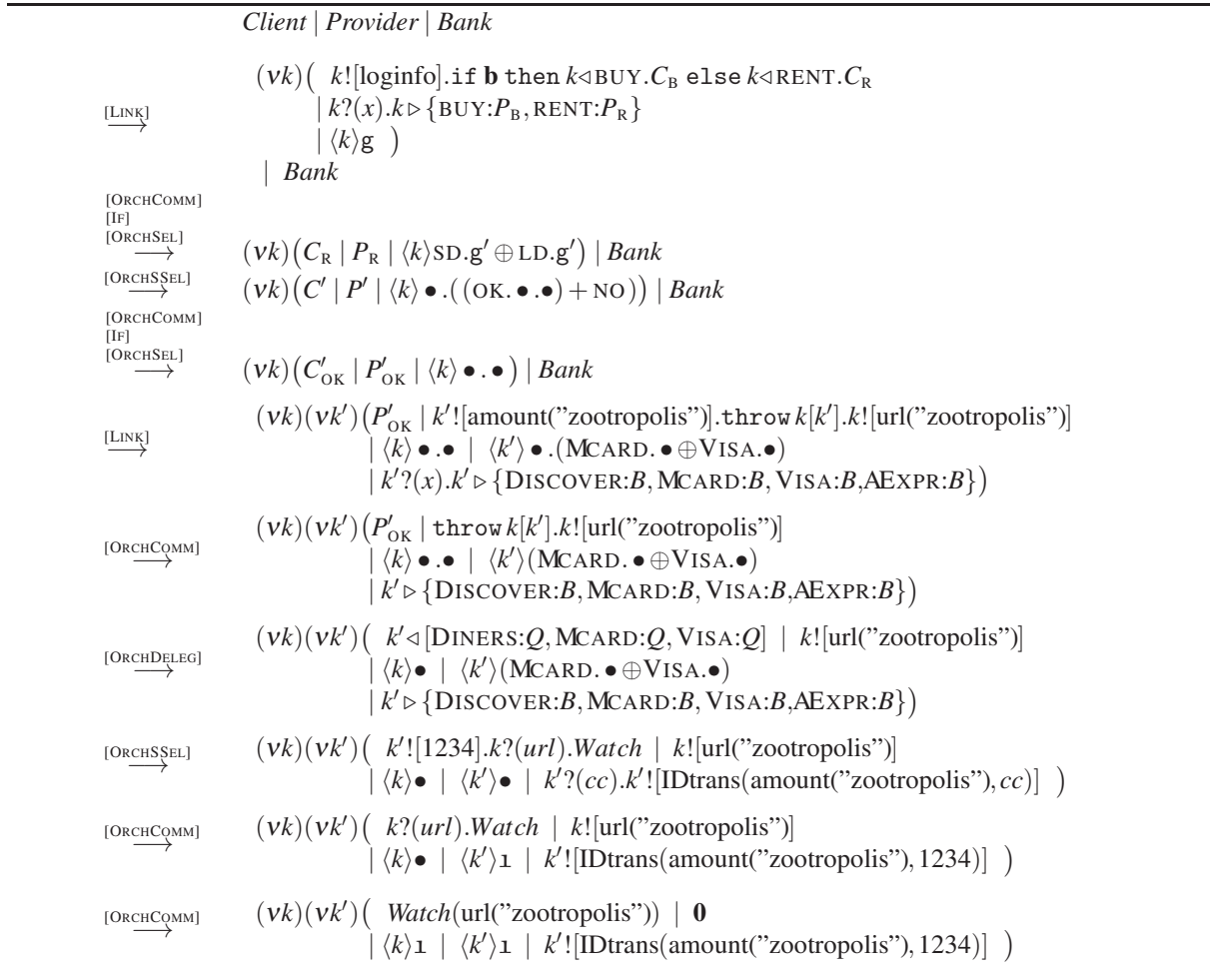

{\small
\hrule
\hspace{2mm}
\[\begin{array}{lll}
  &
\textit{Client} \mid \textit{Provider} \mid \textit{Bank}
&
	
\\[3mm]

\lts{\mbox{\tiny [\textsc{Link}]}}
&
\begin{array}{l}
 (\nu k)\big(\hspace{2mm}\send{k}{\textrm{loginfo}}.
\IfThenElse{\mathbf{b}}{\select{k}{\text{\sc buy}}.\textit{C}_{\text{\sc b}}}{\select{k}{\text{\sc rent}}.\textit{C}_{\text{\sc r}}}\\
\hspace{8mm}\mid \receive{k}{x}.
\branch{k}{\Set{\text{\sc buy}{:}\textit{P}_{\text{\sc b}},\text{\sc rent}{:}\textit{P}_{\text{\sc r}}}}\\
\hspace{8mm}\mid \namedorch{k}{\mathsf{g}}
\hspace{2mm} \big)\\
\hspace{1mm}\mid
\hspace{1mm} \textit{Bank}

\end{array}\\
\!\!\!\!\lts{\mbox{\tiny $\begin{array}{l}\text{[}\textsc{OrchComm}\text{]}\\ \text{[}\textsc{If}\text{]}\\ \text{[}\textsc{OrchSel}\text{]}\end{array}$}}
&
(\nu k)\big( \textit{C}_{\text{\sc r}} \mid \textit{P}_{\text{\sc r}} \mid \namedorch{k}{\mathsf{\text{\sc sd}.\mathsf{g'}\oplus\text{\sc ld}.\mathsf{g'}}}  \big) \mid \textit{Bank}\\

\lts{\mbox{\tiny [\textsc{OrchSSel}]}}
&
(\nu k)\big( \textit{C}' \mid \textit{P}' \mid \namedorch{k}{\io .(\,(\text{\sc ok}.\io.\io)+\text{\sc no}\,)}  \big) \mid \textit{Bank}\\
\vspace{2mm}

\!\!\!\!\lts{\mbox{\tiny $\begin{array}{l}\text{[}\textsc{OrchComm}\text{]}\\ \text{[}\textsc{If}\text{]}\\ \text{[}\textsc{OrchSel}\text{]}\end{array}$}}
&
(\nu k)\big( \textit{C}'_{\text{\sc ok}} \mid \textit{P}'_{\text{\sc ok}} \mid \namedorch{k}{\io.\io}  \big) \mid \textit{Bank}\\
\vspace{2mm}

\lts{\mbox{\tiny [\textsc{Link}]}}
&
\begin{array}{l}
(\nu k)(\nu k')\big( \textit{P}'_{\text{\sc ok}} \mid 
\send{k'}{\text{amount(''zootropolis'')}}.
\throw{k}{k'}  \send{k}{\text{url(''zootropolis'')}}\\
 \hspace{15mm}\mid \namedorch{k}{\io.\io}  \ \mid \
 \namedorch{k'}{\io.(\text{\sc M\!card}.\io\oplus \text{\sc Visa}.\io)}  \\
\hspace{15mm}\mid  \receive{k'}{x}.
\branch{k'}{\Set{\text{\sc Discover}{:}B ,\text{\sc M\!card}{:}B ,
                                 \text{\sc Visa}{:}B ,\text{\sc \!A\!Expr}{:}B}} \big) 
\end{array}\\
\vspace{2mm}

\lts{\mbox{\tiny [\textsc{OrchComm}]}}
&
\begin{array}{l}
(\nu k)(\nu k')\big( \textit{P}'_{\text{\sc ok}} \mid 
\throw{k}{k'}  \send{k}{\text{url(''zootropolis'')}}\\
 \hspace{15mm}\mid \namedorch{k}{\io.\io}  \ \mid \
\namedorch{k'}{(\text{\sc M\!card}.\io\oplus \text{\sc Visa}.\io)}  \\
\hspace{15mm}\mid  
\branch{k'}{\Set{\text{\sc Discover}{:}B ,\text{\sc M\!card}{:}B ,
                                 \text{\sc Visa}{:}B ,\text{\sc \!A\!Expr}{:}B}} \big) 
\end{array}\\
\vspace{2mm}

\lts{\mbox{\tiny [\textsc{OrchDeleg}]}}
&
\begin{array}{l}
(\nu k)(\nu k')\big( \hspace{2mm}\retrselop{k'}{\text{\sc Diners}{:}Q ,\text{\sc M\!card}{:}Q, \text{\sc Visa}{:}Q }\  \mid \
\send{k}{\text{url(''zootropolis'')}}\\
 \hspace{15mm}\mid \namedorch{k}{\io}  \
\mid\ \namedorch{k'}{(\text{\sc M\!card}.\io\oplus \text{\sc Visa}.\io)}  \\
\hspace{15mm}\mid  
\branch{k'}{\Set{\text{\sc Discover}{:}B, \text{\sc M\!card}{:}B, 
                                 \text{\sc Visa}{:}B, \text{\sc \!A\!Expr}{:}B}} \big) 
\end{array}\\
\vspace{2mm}

\lts{\mbox{\tiny [\textsc{OrchSSel}]}}
&
\begin{array}{l}
(\nu k)(\nu k')\big( \hspace{2mm}\send{k'}{1234}.\receive{k}{url}.Watch\
 \mid \
\send{k}{\text{url(''zootropolis'')}}\\
 \hspace{15mm}\mid \namedorch{k}{\io}\  \mid\ \namedorch{k'}{\io}\  \mid\
\receive{k'}{cc}.\send{k'}{\text{IDtrans}(\text{amount(''zootropolis'')},cc)} \hspace{2mm}\big) 
\end{array}\\
\vspace{2mm}

\lts{\mbox{\tiny [\textsc{OrchComm}]}}
&
\begin{array}{l}
(\nu k)(\nu k')\big( \hspace{2mm}\receive{k}{url}.Watch\ \mid \
\send{k}{\text{url(''zootropolis'')}}\\
 \hspace{15mm}\mid \namedorch{k}{\io}\  \mid\ \namedorch{k'}{\stopf}\  \mid\
\send{k'}{\text{IDtrans}(\text{amount(''zootropolis'')},1234)} \hspace{2mm}\big) 
\end{array}\\[3mm]

\lts{\mbox{\tiny [\textsc{OrchComm}]}}
&
\begin{array}{l}
(\nu k)(\nu k')\big( \hspace{2mm}\textit{Watch}(\text{url(''zootropolis'')})\  \mid \
\inact\\
 \hspace{15mm}\mid \namedorch{k}{\stopf}\  \mid\ \namedorch{k'}{\stopf}\  \mid\
\send{k'}{\text{IDtrans}(\text{amount(''zootropolis'')},1234)} \hspace{2mm}\big) 
\end{array}
\end{array}\]
}
\hrule
\caption{Reductions example}\label{fig:red-ex}
\end{figure}

\begin{example}{\em
We can now see in Figure \ref{fig:red-ex} the evolution of the user-defined process of our example
$$\textit{Client}\,\mid\,\textit{Provider}\,\mid\,\textit{Bank}$$
We assume that the client decides to rent an available  movie.
}\end{example}

\noindent
The type system guarantees type-checked processes to be free from (a version of) the standard synchronization errors of session-type-based calculi, which now involve also orchestrators
and that we dub {\em orchestrated synchronization errors}.
It also prevent (a class of) errors due to the absence of orchestration action and that we dub
{\em vacuous orchestration errors}.

\begin{example}
{\em 
\label{ex:errorexamples}
Let us see some examples of errors that cannot actually occur in typeable processes
(see Lemma \ref{lem:typablenoerr} below).
\begin{description}
\item 
$(\nu k)(\send{k^+}{e}.R \mid \send{k^-}{e'}.R' \mid \namedorch{k}{\orchF})$
This process is stuck. It cannot be typed since the types for $k^+$ and $k^-$ should have,
respectively, the form $\outputtype{G}.{S_1}$ and $\outputtype{G'}.{S_2}$. Rule $[\textsc{CRes-T}]$
cannot be applied to type the whole process, since, by Definition \ref{def:orchCompliance}, these types are not compliant (even in case $G=G'$).

\item
$(\nu k)(\select{k^+}{l}.R \mid \branch{k^+}{\Set{l_j{:}R_j}_{j\in J}} \mid \namedorch{k}{\orchPlus_{i\in I}l_i.\orchF_i})$
This stuck process cannot be typed since rule $[\textsc{CRes-T}]$ requires the compliant types to be assigned to polarized channels with different polarities.

\item 
$(\nu k)(\send{k^+}{e}.R \mid \receive{k^-}{x}.R' \mid \namedorch{k}{{\orchPlus_{i\in I}l_i.\orchF_i}})$
This process is stuck since the orchestrator does not enable the input/output syncronization.
It cannot be typed either. In fact the types for $k^+$ and $k^-$ should have,
respectively, the form $\outputtype{G}.{S_1}$ and $\inputtype{G}.{S_2}$, and rule $[\textsc{CRes-T}]$ cannot be applied since, by Definition \ref{def:orchCompliance},
it is impossible to have ${\orchPlus_{i\in I}l_i.\orchF_i}: \outputtype{G}.{S_1} \comply \inputtype{G}.{S_2}$.
\item 
$(\nu k)(\send{k^-}{e}.R \mid R' \mid \namedorch{k}{\stopf})$ Since any reduction is necessarily driven by an orchestrating action, there is no possibility for the  process 
$\send{k^-}{e}.R$ to progress. Unlike the previous examples, this sort of deadlock depends exclusively on the lack of orchestration actions.
\end{description}
}
\end{example}

%Following \cite{honda.vasconcelos.kubo:language-primitives,??,??}, we define the notions of
%$k$-process and $k$-redex, extending the latter with the needed presence of a named orchestrator.

\begin{definition}[Errors]\hfill
\begin{enumerate}[i)]
\item
A {\em $k^p$-process} is a run-time process term whose first action involves the channel $k$, namely a process having one of the 
following forms:

$\send{k^p}{e}.P$, $\ \receive{k^p}{x}.P$, $\ \throw{k^p}{{k'}^q}{P}$, $\ \catch{k^p}{{k'}^q}{P}$,
$\ \select{k^p}{l}.P$, $\ \branch{k^p}{\Set{l_i{:}P_i}_{i\in I}}$, $\ \retrselop{k^p}{l_i{:}P_i}_{i\in I}$;
\item A {\em potential $k$-redex} is a process term 
formed by the parallel composition of 
one  $k^p$-processe, one $k^q$-processe, for some $p$ and $q$, and one $k$-named orchestrator;
\item
A process $P$ is an {\em orchestrated synchronization error} (orch-synch error, for short) if it contains a potential {\em $k$-redex} which does not reduces;
\item
A process $R$ is a {\em vacuous-orchestration error} if it contains a subterm $R'$
such that\\
% one of the following holds
%\begin{enumerate}[a)]
%\item
%$R' = \namedorch{k}{\stopf}$ and it is part of a potential {\em $k$-redex};
%\item
\centerline{
$R' = \namedorch{k}{\stopf} \mid R''$ ~~~~where $R''$ is a  $k^-$-process;}
\item
An {\em error} is either an orchestrated synchronization error or a vacuous-orchestration error.
\end{enumerate}
\end{definition} 

\noindent
The type system guarantees that
a typable process cannot be an error.\\

\begin{lemma}
\label{lem:typablenoerr}
Let $\Gamma \bvdash R \oftype \Delta$. Then $R$ is not an error.
\end{lemma}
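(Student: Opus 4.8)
The plan is to argue by contradiction, relying on inversion (generation) lemmas for the type system. Assume $\Gamma \bvdash R \oftype \Delta$ yet $R$ is an error. Since $R$ is a run-time process, every named orchestrator $\namedorch{k}{\orchF}$ occurring in $R$ lies in the scope of a matching binder $(\nu k)$, and that binder can only have been introduced by rule {\small $[$\textsc{CRes-T}$]$}; hence there are session types $S_1,S_2$ with $\orchComply{\orchF}{S_1}{S_2}$ such that the body $P$ in the corresponding $(\nu k)(\namedorch{k}{\orchF}\mid P)$ is typed with $k^-{:}S_1$ and $k^+{:}S_2$ added to its process-typing. A routine generation lemma then gives: if a subterm $P_0$ of that body is a $k^p$-process, then in the derivation $P_0$ is typed with $k^p$ carrying the corresponding component of $\{S_1,S_2\}$ (rule {\small $[$\textsc{Conc-T}$]$} only forms unions of typings, and the type attached to $k$ propagates unchanged through the derivation except where an action on $k$ is typed), and by inversion on that head-action rule the top constructor of this type is forced: a value or session-channel send/receive forces an output $\outputtype{\cdot}.S$ resp. input $\inputtype{\cdot}.S$ prefix (matched under clause~2 of Definition~\ref{def:orchCompliance}, read as covering session-channel communication too), a plain selection forces $\selectiontype{l_i{:}S_i}_{i\in I}$ with the emitted label in $I$, a branching offering labels $L$ forces $\branchtype{l_j{:}S_j}_{j\in J}$ with $L\supseteq J$, and a speculative selection offering labels $L$ forces $\retrseltype{l_i{:}S_i}_{i\in I}$ with $L=I$.

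Suppose first that $R$ contains a non-reducing potential $k$-redex $P_1 \mid P_2 \mid \namedorch{k}{\orchF}$. Because {\small $[$\textsc{Conc-T}$]$} requires disjoint typing domains and $P_1,P_2$ are distinct parallel components both mentioning $k$, they carry opposite polarities on $k$; so, up to swapping, $P_1$ is a $k^-$-process typed with $k^-{:}S_1$ and $P_2$ a $k^+$-process typed with $k^+{:}S_2$, where $\orchComply{\orchF}{S_1}{S_2}$ is provided by the governing {\small $[$\textsc{CRes-T}$]$}. Since $P_1$ and $P_2$ each perform a head action on $k$, neither $S_1$ nor $S_2$ equals $\stopA$, so $\orchComply{\orchF}{S_1}{S_2}$ cannot hold through clause~\ref{def:orchCompliance-i} of Definition~\ref{def:orchCompliance} and must hold through one of the remaining clauses. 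Matching each such clause against the shape constraints on $P_1,P_2$ given by the generation lemma, one checks in every case that the redex — re-enclosed in its $(\nu k)$ and rearranged by structural congruence (Definition~\ref{def:run-time-strCongr}) — is an instance of the left-hand side of exactly one of {\small $[$\textsc{OrchComm}$]$}, {\small $[$\textsc{OrchDeleg}$]$}, {\small $[$\textsc{OrchSel}$]$}, {\small $[$\textsc{OrchSSel}$]$}, and that its side condition holds: for {\small $[$\textsc{OrchSel}$]$} the emitted label lies in both the orchestrator's index set and the set of labels offered by the branching; for {\small $[$\textsc{OrchSSel}$]$} the internal choice's nonempty index set $H$ satisfies $H\subseteq I\cap J$, hence $H$ lies in the intersection of the three relevant label sets; and for {\small $[$\textsc{OrchComm}$]$} one additionally uses that a closed ground expression evaluates to a value. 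Thus the potential $k$-redex reduces, a contradiction.

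Suppose instead that $R$ contains a subterm $\namedorch{k}{\stopf} \mid R''$ with $R''$ a $k^-$-process. The governing {\small $[$\textsc{CRes-T}$]$} types the body with $k^-{:}S_1$, $k^+{:}S_2$ and requires $\orchComply{\stopf}{S_1}{S_2}$; but the only clause of Definition~\ref{def:orchCompliance} whose orchestrator is $\stopf$ is clause~\ref{def:orchCompliance-i}, which forces $S_1=\stopA$. On the other hand $R''$ is a $k^-$-process, so by the generation lemma its type for $k^-$, namely $S_1$, has a non-$\stopA$ head constructor — a contradiction. Hence no error configuration is typable.

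I expect the main obstacle to be the bookkeeping behind the generation lemma: precisely relating the local typing of a $k^p$-subterm (obtained through several applications of {\small $[$\textsc{Conc-T}$]$} and inner restrictions on unrelated channels deep inside the derivation) to the types $S_1,S_2$ emitted by the governing {\small $[$\textsc{CRes-T}$]$}, together with the structural-congruence manipulations needed to expose the redex directly under its binder without disturbing its three components. Once that infrastructure is in place the compliance case analysis is short and essentially forced; the only mildly delicate point there is verifying the index-set side conditions of {\small $[$\textsc{OrchSel}$]$} and {\small $[$\textsc{OrchSSel}$]$}, which are exactly what clause~\ref{def:orchCompliance-iv} (and the preceding clause) of Definition~\ref{def:orchCompliance} were designed to guarantee.
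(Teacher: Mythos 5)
Your proof is correct, and it is essentially the argument the paper intends: the paper states Lemma~\ref{lem:typablenoerr} without an explicit proof, offering only the informal case analysis of Example~\ref{ex:errorexamples}, and your combination of inversion on the head-action rules, the fact that every named orchestrator is governed by an application of $[$\textsc{CRes-T}$]$ (so some $\orchComply{\orchF}{S_1}{S_2}$ is available), and a clause-by-clause match against Definition~\ref{def:orchCompliance} is the natural way to make that sketch precise --- including the opposite-polarity observation via disjointness in $[$\textsc{Conc-T}$]$, which is exactly the paper's second bullet, and the $\stopf$/clause~\ref{def:orchCompliance-i} argument for vacuous-orchestration errors. The one place you go beyond the letter of the definitions is reading clause~2 of Definition~\ref{def:orchCompliance} as covering session-channel (delegation) prefixes in addition to ground types; this is how the paper itself uses compliance in the $[$\textsc{OrchDeleg}$]$ case of Theorem~\ref{thm:subjectReduction}, so your reading is the intended one, and your explicit assumption that closed well-typed ground expressions evaluate (needed for the $[$\textsc{OrchComm}$]$ redex to fire) is likewise left implicit in the paper.
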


By means of the Subject Reduction property we show that
a typable initial process never reduces to an error.\\

Concerning the typing of expressions, we assume the  standard property that if $\Gamma \vdash e:G$ and $e \downarrow v$ then
$\Gamma \vdash v:G$. Then the following technical lemma, to be used for the subject reduction property (Theorem \ref{thm:subjectReduction}), is proved by an easy induction on derivations:

\medskip
\begin{lemma}
\label{lem:subst}
If $\Gamma \vdash e:G$, $\Gamma, x:G \bvdash R \oftype \Delta$, and $e \downarrow v$ then
$\Gamma \bvdash R\Subst{v}{x} \oftype \Delta$.
\end{lemma}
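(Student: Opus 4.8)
The plan is to argue by induction on the derivation of $\Gamma, x{:}G \bvdash R \oftype \Delta$ (equivalently, on the structure of $R$), with $\Gamma$, $e$, $v$ and $G$ fixed. From $\Gamma \vdash e:G$, $e \downarrow v$ and the assumed property on expressions we first obtain $\Gamma \vdash v:G$; this is the only place where the hypothesis $e \downarrow v$ is used, so what really has to be shown is: if $\Gamma \vdash v:G$ and $\Gamma, x{:}G \bvdash R \oftype \Delta$ then $\Gamma \bvdash R\Subst{v}{x} \oftype \Delta$. Along the way I will also use the standard substitution lemma for ground-type expressions, namely that $\Gamma, x{:}G \vdash e':G'$ and $\Gamma \vdash v:G$ imply $\Gamma \vdash e'\Subst{v}{x}:G'$ (an immediate induction on the derivation of $e':G'$), which I take as known.

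Then I would go through the rules of Figure \ref{fig:typeSystem} one at a time. The leaf rule [\textsc{Inact-T}] is immediate, since $\inact\Subst{v}{x} = \inact$ and $\Delta$ is unchanged. The rules [\textsc{Conc-T}], [\textsc{Br-T}], [\textsc{Sel-T}], [\textsc{SSel-T}], [\textsc{Acc-T}], [\textsc{Req-T}], [\textsc{Cat-T}], [\textsc{Thr-T}], [\textsc{CRes-T}] and [\textsc{CRes'-T}] all follow by applying the induction hypothesis to each process premise and re-applying the same rule: substitution commutes with every process constructor and leaves polarised channel names untouched, so the channel names bound in [\textsc{Acc-T}], [\textsc{Req-T}], [\textsc{Cat-T}], [\textsc{CRes-T}], [\textsc{CRes'-T}] are disjoint from the expression variable $x$ and from $v$ (which has ground type) and no capture arises, while in [\textsc{CRes-T}] the side condition $\orchComply{\orchF}{S_1}{S_2}$ mentions only types and is therefore unaffected. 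The rule [\textsc{Rec-T}] is analogous but extends the context with a fresh expression variable $y{:}G'$; there, after $\alpha$-converting so that $y \notin \{x\} \cup \fv{v}$, one first weakens $\Gamma \vdash v:G$ to $\Gamma, y{:}G' \vdash v:G$ and then applies the induction hypothesis to the premise.

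The only rule requiring more than the induction hypothesis is [\textsc{Send-T}], where $R = \send{k^p}{e'}.P$ with premises $\Gamma, x{:}G \vdash e':G'$ and $\Gamma, x{:}G \bvdash P \oftype \Delta\cdot k^p{:}S$: here the expression substitution lemma gives $\Gamma \vdash e'\Subst{v}{x}:G'$, the induction hypothesis gives $\Gamma \bvdash P\Subst{v}{x} \oftype \Delta\cdot k^p{:}S$, and since $(\send{k^p}{e'}.P)\Subst{v}{x} = \send{k^p}{e'\Subst{v}{x}}.(P\Subst{v}{x})$ one concludes by [\textsc{Send-T}]. There is no genuine obstacle here; the only mildly delicate points are the bookkeeping around bound names in the binding rules and the (routine) appeal to the ground-expression substitution lemma.
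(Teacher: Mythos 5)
Your proposal is correct and matches the paper's argument: the paper simply states that Lemma~\ref{lem:subst} is ``proved by an easy induction on derivations,'' which is exactly the case analysis you carry out, with the only non-trivial case being [\textsc{Send-T}] via the ground-expression substitution property and the assumed preservation of types under evaluation.
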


Before proceeding with the Subject Reduction property, we show that typability is invariant with respect to the structural congruence formalized in Definition \ref{def:run-time-strCongr}.
The presence of orchestrators makes this usually fairly standard proof more subtle to handle. 
\begin{lemma}
\label{lem:subjcongr}
If $\Gamma \bvdash P \oftype \Delta$ and $P \congr  Q$ then
$\Gamma \bvdash Q \oftype \Delta$.
\end{lemma}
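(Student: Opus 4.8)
The statement is the standard "subject congruence" lemma, but complicated by the fact that structural congruence in Definition~\ref{def:run-time-strCongr} moves named orchestrators $\namedorch{k}{\orchF}$ around scopes. The plan is to prove it by induction on the derivation of $P \congr Q$, treating it as the congruence closure of the three generating clauses, with the usual sub-induction on the last rule used to close under contexts. Because $\congr$ is symmetric, for each generating clause I must argue that typability of the left-hand side yields typability of the right-hand side and vice versa, so in practice I will show the two directions separately for each clause.

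The first thing I would set up is an \emph{inversion} (generation) observation for the typing rules that will be used repeatedly: a run-time process of the shape $(\nu k)(\namedorch{k}{\orchF}\mid P)$ is typed only by rule $[\textsc{CRes-T}]$, so $\Gamma\bvdash (\nu k)(\namedorch{k}{\orchF}\mid P)\oftype\Delta$ forces the existence of $S_1,S_2$ with $\orchComply{\orchF}{S_1}{S_2}$ and $\Gamma\bvdash P\oftype\Delta\cdot k^-{:}S_1\cdot k^+{:}S_2$; similarly $[\textsc{Conc-T}]$ is the only rule for a parallel composition that is not a named orchestrator. With this in hand the clauses are handled as follows. Clause~1 (alpha-renaming of bound channels) is routine: renaming $k$ to a fresh $k'$ throughout changes $k^-,k^+$ to ${k'}^-,{k'}^+$ in the typing under the binder, and since the side condition $\orchComply{\orchF}{S_1}{S_2}$ is unaffected, the same derivation reshapes. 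Clause~3 (commutativity and associativity of $\mid$ among non-orchestrator processes) uses associativity/commutativity of the partial operation $\cdot$ on typings: $\Delta\cdot\Delta'$ is defined and equals $\Delta'\cdot\Delta$ exactly when the domains are disjoint, and likewise for the ternary regroupings, so $[\textsc{Conc-T}]$ can be re-applied in the permuted order --- the restriction in clause~3 to processes that are not named orchestrators is precisely what makes this safe, since it guarantees we never need to commute a $\namedorch{k}{\orchF}$ out of the scope of its $(\nu k)$.

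The substantive case is Clause~2, the scope-extrusion law
\[
(\nu k)(\namedorch{k}{\orchF}\mid Q\mid(\nu k')(\namedorch{k'}{\orchG}\mid Q'\mid R)) \;\congr\;
(\nu k')(\namedorch{k'}{\orchG}\mid(\nu k)(\namedorch{k}{\orchF}\mid Q\mid Q')\mid R)
\]
under the side conditions $k\notin\fc{R}$, $k'\notin\fc{Q}$. Here I would take the left-hand side, apply inversion for $[\textsc{CRes-T}]$ to strip $(\nu k)$, getting $\orchComply{\orchF}{S_1}{S_2}$ and a derivation of $Q\mid(\nu k')(\namedorch{k'}{\orchG}\mid Q'\mid R)\oftype \Delta\cdot k^-{:}S_1\cdot k^+{:}S_2$; then $[\textsc{Conc-T}]$ splits this into $\Gamma\bvdash Q\oftype \Delta_Q$ and a derivation for the inner restriction with a typing $\Delta_R$, where $\Delta_Q\cdot\Delta_R = \Delta\cdot k^-{:}S_1\cdot k^+{:}S_2$; one more application of $[\textsc{CRes-T}]$ and $[\textsc{Conc-T}]$ decomposes the inner block giving $\orchComply{\orchG}{S_1'}{S_2'}$, $\Gamma\bvdash Q'\oftype\Delta_{Q'}$, $\Gamma\bvdash R\oftype\Delta_R^{0}$. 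The side condition $k\notin\fc{R}$ (plus the convention that $k^+,k^-$ are pairwise distinct names) ensures $k^-,k^+\notin\dom(\Delta_R^{0})$, so when reassembling the right-hand side, the inner $(\nu k)$ can legitimately scope over $Q\mid Q'$ only, with its typing contribution carrying $k^-{:}S_1\cdot k^+{:}S_2$; symmetrically $k'\notin\fc{Q}$ gives ${k'}^\pm\notin\dom(\Delta_Q)$ so pulling $(\nu k')$ outward past $Q$ is sound. Re-deriving top-down with the same $\orchF,\orchG$ and the same compliance side conditions yields the right-hand side with typing $\Delta$. The reverse direction is the mirror image. The main obstacle, and the place to be careful, is exactly this bookkeeping of which channel typings land under which binder: I must check that the disjointness requirements on $\cdot$ are met at every re-application of $[\textsc{Conc-T}]$, and that the free-channel side conditions of clause~2 translate correctly into "not in the domain of the relevant $\Delta$". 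Everything else is routine, and I would remark that this is the reason the lemma is singled out: unlike in duality-based session calculi, the orchestrator is a physical subterm tied to a specific $(\nu k)$, so scope manipulations must drag it along, and the side conditions of Definition~\ref{def:run-time-strCongr}(\ref{def:run-time-strCongr-ii}) are tuned precisely so that the corresponding typing rearrangement is always possible.
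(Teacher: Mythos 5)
Your proposal is correct and follows essentially the same route as the paper's proof: induction on the definition of $\congr$, with the substantive work in the scope-extrusion clause, where the derivation ending in $[\textsc{CRes-T}]$/$[\textsc{Conc-T}]$ is inverted and reassembled, using the side conditions $k\notin\fc{R}$ and $k'\notin\fc{Q}$ to justify that the relevant polarized channels are absent from the relevant typings so that every re-application of $[\textsc{Conc-T}]$ is defined and the outer typing is again $\Delta$. The paper's proof carries out exactly this bookkeeping (via explicit $\Delta_i$'s and the $\setminus k$ operation), so no further comment is needed.
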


\begin{proof}
By induction over the definition of $\congr$. We illustrate the case \ref{def:run-time-strCongr}.\ref{def:run-time-strCongr-ii}, where we have:
\[(\nu k)(\namedorch{k}{\orchF} \mid P \mid (\nu k')(\namedorch{k'}{\orchG} \mid Q \mid R)) \congr
	(\nu k')(\namedorch{k'}{\orchG} \mid (\nu k)(\namedorch{k}{\orchF} \mid P \mid Q )\mid R)\]
and $k\not \in \fc{R}, k' \not \in \fc{P}$.
The derivation of $\Gamma \bvdash (\nu k)(\namedorch{k}{\orchF} \mid P \mid (\nu k')(\namedorch{k'}{\orchG} \mid Q \mid R)) \oftype \Delta$ 
ends by:
{\small
\[
\prooftree
	\prooftree
		\Gamma \bvdash P \oftype \Delta_2 
		\quad
		\prooftree
			\prooftree
				\Gamma \bvdash Q \oftype \Delta_5
				\quad
				\Gamma \bvdash R \oftype \Delta_6
			\justifies
				\Gamma \bvdash Q \mid R \oftype \Delta_4
			\endprooftree
			\quad \orchComply{\orchG}{S_3}{S_4}
		\justifies
			\Gamma \bvdash (\nu k')(\namedorch{k'}{\orchG} \mid Q \mid R) \oftype \Delta_3
		\endprooftree
		%\quad \orchComply{\orchG}{S_3}{S_4}
	\justifies
		\Gamma \bvdash P \mid (\nu k')(\namedorch{k'}{\orchG} \mid Q \mid R) \oftype \Delta_1 %\cdot\; k^-:S_1\;\cdot\; k^-:S_2
	\endprooftree
	\quad \orchComply{\orchF}{S_1}{S_2}
\justifies
	\Gamma \bvdash (\nu k)(\namedorch{k}{\orchF} \mid P \mid (\nu k')(\namedorch{k'}{\orchG} \mid Q \mid R)) \oftype \Delta
\endprooftree
\]
}
where $\Delta_1 = \Delta \cdot\; k^-:S_1\;\cdot\; k^+:S_2$ for some $S_1, S_2$;  $  \Delta_1=\Delta_2 \cdot \Delta_3$; 
$\Delta_4 = \Delta_3 \cdot\; k'^-:S_3\;\cdot\; k'^+:S_4$ for some $S_3, S_4$; finally $ \Delta_4=\Delta_5 \cdot \Delta_6 $.
Let's set $\Delta\setminus k = \Delta \setminus \bigcup_{S,S'}\Set{k^p:S, k^{\Dual{p}}:S'}$. Then such a derivation can be rearranged as follows:
{\small
\[
\prooftree
	\prooftree
		\prooftree
			\prooftree
				\Gamma \bvdash P \oftype \Delta_2
				\quad
				\Gamma \bvdash Q \oftype \Delta_5
			\justifies
				\Gamma \bvdash P \mid Q \oftype \Delta_2\cdot \Delta_5
			\endprooftree
			\quad \orchComply{\orchF}{S_1}{S_2}
		\justifies
			\Gamma \bvdash (\nu k)(\namedorch{k}{\orchF} \mid P \mid Q )  \oftype (\Delta_2\cdot \Delta_5) \setminus k
		\endprooftree
		\quad
		\Gamma \bvdash R \oftype \Delta_6
	\justifies
		\Gamma \bvdash (\nu k)(\namedorch{k}{\orchF} \mid P \mid Q )\mid R
			\oftype ((\Delta_2\cdot \Delta_5) \setminus k) \cdot \Delta_6
	\endprooftree
	\quad \orchComply{\orchG}{S_3}{S_4}
\justifies
	\Gamma \bvdash (\nu k')(\namedorch{k'}{\orchG} \mid (\nu k)(\namedorch{k}{\orchF} \mid P \mid Q )\mid R) 
		\oftype (((\Delta_2\cdot \Delta_5) \setminus k) \cdot \Delta_6) \setminus k'
\endprooftree
\]
}
To see that this is a correct derivation, let us assume that $k'\not \in \dom(\Delta_2)$, because $k' \not \in \fc{P}$ is the side condition
of case  \ref{def:run-time-strCongr}.\ref{def:run-time-strCongr-ii}.
Now $\Delta_2\cdot \Delta_5$ is defined since 
$\Delta_5\setminus k' = \Delta_3$, and we know that $\Delta_2 \cdot \Delta_3 = \Delta_1$ is defined. On the other hand
if $k'' \in \dom(\Delta_6)$ then $k'' \not \in \dom(\Delta_5)$ because $\Delta_5 \cdot \Delta_6 = \Delta_4$ is defined; also if
$k'' \neq k$ then $k'' \not \in \dom(\Delta_2)$ since $\dom(\Delta_6) \subseteq \dom(\Delta_3)\setminus \Set{k'}$ and
$\Delta_2$ and $\Delta_3$ are compatible. It follows that $((\Delta_2\cdot \Delta_5) \setminus k) \cdot \Delta_6$ is defined.
Finally, assuming without loss of generality that $k \not \in \dom(\Delta_6)$ as we know that $k\not \in \fc{R}$, we conclude:
\[\begin{array}{llll}
\Delta & = & (\Delta_2 \cdot (\Delta_5 \cdot \Delta_6)\setminus k')\setminus k & \mbox{by the first derivation}\\
& = & (\Delta_2 \cdot \Delta_5 \cdot \Delta_6) \setminus k' \setminus k & \mbox{since $k'\not \in \dom(\Delta_2)$} \\
& = & ((\Delta_2 \cdot \Delta_5) \setminus k ~\cdot~ \Delta_6 \setminus k) \setminus k' \\
& = & (((\Delta_2\cdot \Delta_5) \setminus k) ~\cdot~ \Delta_6) \setminus k' & \mbox{since $k\not \in \dom(\Delta_6)$} 
\end{array}\]
\end{proof}

\noindent
Since no typable processes is an orch-synch error  (Lemma \ref{lem:typablenoerr}),
the following theorem property guarantees (see Corollary \ref{cor:errorFree}) that no error can appear during the evolution of a typable user-defined process.

\begin{theorem}[Subject reduction]
\label{thm:subjectReduction}
If $\Gamma \bvdash P \oftype \Delta$ and $P \lts{}  Q$  then
$\Gamma \bvdash Q \oftype \Delta$.
\end{theorem}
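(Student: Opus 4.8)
The plan is to argue by induction on the derivation of $P \lts{} Q$, with a case analysis on the last reduction rule. The preliminary ingredient is the usual family of generation (inversion) lemmas for the system of Figure~\ref{fig:typeSystem}, read off essentially mechanically since the typing rules are syntax-directed; the only points needing attention are that rule [\textsc{Br-T}] allows a branching \emph{process} to offer strictly more labels than its \emph{type} records (side condition $I \supseteq J$), and that a restriction $(\nu k)(\namedorch{k}{\orchF} \mid P)$ can only be typed by [\textsc{CRes-T}] — a bare $\namedorch{k}{\orchF}$ being untypable, so [\textsc{CRes'-T}] never applies to it — whence its inversion yields a splitting $\Delta\cdot k^-{:}S_1\cdot k^+{:}S_2$ of the typing of the body together with a side condition $\orchComply{\orchF}{S_1}{S_2}$.

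For the structural rules the argument is routine. For [\textsc{Par}] we invert [\textsc{Conc-T}], apply the induction hypothesis to the reducing component and recompose with [\textsc{Conc-T}], keeping the split $\Delta=\Delta_1\cdot\Delta_2$. For [\textsc{Scop}] we invert [\textsc{CRes-T}] and observe that the step must come from a reduction of the body (a bare named orchestrator being inert), so the induction hypothesis applies to the body — which still carries the entries $k^-{:}S_1,k^+{:}S_2$ and the side condition $\orchComply{\orchF}{S_1}{S_2}$ — after which [\textsc{CRes-T}] is reapplied. For [\textsc{Str}] we compose the induction hypothesis with Lemma~\ref{lem:subjcongr} on both sides of the reduction, using that $\congr$ is symmetric.

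The core of the proof is the five communication rules. In each case we invert the typing of the redex through [\textsc{CRes-T}], [\textsc{Conc-T}] and the rule typing the two prefixed processes, obtaining typings of the continuations together with a compliance judgement $\orchComply{\orchG}{S}{S'}$ whose orchestrator $\orchG$ is precisely the one consumed by the step. Inversion on Definition~\ref{def:orchCompliance} then delivers the compliance of the residual orchestrator with the continuation types: clause~2 covers [\textsc{OrchComm}] and [\textsc{OrchDeleg}] (forcing in addition the transmitted value- resp.\ session-type to coincide), clause~3 covers [\textsc{OrchSel}], and clause~\ref{def:orchCompliance-iv} covers [\textsc{OrchSSel}], in each case yielding $\orchComply{\orchF_c}{S_c}{S'_c}$ for the selected label $c$, which the relevant side condition ($c\in H\cap I$, resp.\ $c\in H\cap I\cap J$) places in all the needed index sets. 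For [\textsc{OrchComm}] we additionally invoke Lemma~\ref{lem:subst} to retype the receiver after the value substitution; for [\textsc{OrchDeleg}] we note that the delegated endpoint ${k'}^q{:}S_1$ migrates from the sender's typing to the receiver's, so the global typing $\Delta$ is unchanged. Recomposing with [\textsc{Conc-T}] and [\textsc{CRes-T}] — the residual compliance playing the role of the side condition of the latter — gives $\Gamma\bvdash Q\oftype\Delta$. The [\textsc{Link}] case is analogous but simpler: inverting [\textsc{Req-T}] and [\textsc{Acc-T}] gives $\Gamma\bvdash P\Subst{k^-}{k}\oftype\Delta_1\cdot k^-{:}S$ and $\Gamma\bvdash Q\Subst{k^+}{k}\oftype\Delta_2\cdot k^+{:}S'$ with $\Delta=\Delta_1\cdot\Delta_2$, and since the side condition of [\textsc{Link}] is exactly the premise $\orchComply{\orchF}{S}{S'}$ that [\textsc{CRes-T}] requires, the typing of the reduct follows at once (freshness of the bound $k$ making $\Delta_1\cdot\Delta_2\cdot k^-{:}S\cdot k^+{:}S'$ defined).

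I expect the main difficulty to lie not in the core step — smooth precisely because Definition~\ref{def:orchCompliance} is phrased inductively, so that the compliance of a compound orchestrator literally carries the compliance of its residuals — but in the bookkeeping forced by the run-time syntax with named orchestrators: keeping the restriction entries $k^-,k^+$ on the correct side of a $(\nu k)$ when an endpoint is delegated, and, more delicately, using structural congruence (scope extrusion, clause~\ref{def:run-time-strCongr-ii} of Definition~\ref{def:run-time-strCongr}) to bring a redex into the exact ternary shape the Orch-rules demand. Here the linearity of session endpoints enforced by the typing — each of $k^+,k^-$ occurring in at most one parallel component — is what makes the side conditions $k\notin\fc{\cdot}$ required by scope extrusion hold, and the accompanying $\Delta$-manipulations are of the same flavour as those already carried out in the proof of Lemma~\ref{lem:subjcongr}.
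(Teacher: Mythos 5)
Your proposal follows essentially the same route as the paper's own proof: induction on the derivation of the reduction, inverting the typing rules and Definition~\ref{def:orchCompliance} in each of the \textsc{Link}/\textsc{Orch}-cases to extract the residual compliance $\orchComply{\orchF_c}{S_c}{S'_c}$, using Lemma~\ref{lem:subst} for the value substitution in [\textsc{OrchComm}] and Lemma~\ref{lem:subjcongr} for [\textsc{Str}], and recomposing with [\textsc{Conc-T}] and [\textsc{CRes-T}]. The bookkeeping points you flag (migration of the delegated endpoint, definedness of the recombined typings, structural congruence) are exactly the ones the paper works through, so the proposal is correct and not substantially different.
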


\begin{proof}
By induction over the definition of $P \lts{}  Q$. In case of rule [\textsc{Link}] we have:
\[\typedrequest{a}{S}{k}{P} \mid \typedaccept{b}{S'}{k}{Q} \lts{}
		(\nu k)(\namedorch{k}{\orchF} \mid P\Subst{k^-}{k} \mid Q\Subst{k^+}{k} )
\]
for some $\orchF$ such that $\orchComply{\orchF}{S}{S'}$.
On the other hand by hypothesis and the shape of the rules, we have the derivation:
{\small \[
\prooftree
	\prooftree
		\Gamma \bvdash P\Subst{k^-}{k} \oftype \Delta\cdot k^-:S
	\justifies
		\Gamma \bvdash \typedrequest{a}{S}{k}{P} \oftype \Delta
	\endprooftree
	\quad
	\prooftree
		\Gamma \bvdash Q\Subst{k^+}{k} \oftype \Delta\cdot k^+:S'
	\justifies
		\Gamma \bvdash \typedaccept{b}{S'}{k}{Q} \oftype \Delta'
	\endprooftree
\justifies
	\Gamma \bvdash \typedrequest{a}{S}{k}{P} \mid 
		\typedaccept{b}{S'}{k}{Q} \oftype \Delta\cdot \Delta'
\endprooftree
\]
}
where $\Delta\cdot \Delta'$ has to be defined (namely $\dom(\Delta) \cap \dom(\Delta') = \emptyset$).
This implies that: 
\[(\Delta\cdot k^-:S) \cdot (\Delta'\cdot k^+:S') = \Delta\cdot \Delta' \cdot \; k^-:S \;\cdot \; k^+:S'
\] 
is defined as well, so that using the fact that $\orchF: S \comply S'$, we obtain the following derivation:
{\small 
\[
\prooftree
	\prooftree
		\Gamma \bvdash P\Subst{k^-}{k} \oftype \Delta\cdot k^-:S
		\quad
		\Gamma \bvdash Q\Subst{k^+}{k} \oftype \Delta\cdot k^+:S'
	\justifies
		\Gamma \bvdash P\Subst{k^-}{k} \mid Q\Subst{k^+}{k} \oftype
			\Delta\cdot \Delta' \cdot \; k^-:S \;\cdot \; k^+:S'
	\endprooftree
\justifies
	\Gamma \bvdash (\nu k)(\namedorch{k}{\orchF} \mid P\Subst{k^-}{k} \mid Q\Subst{k^+}{k} ) 
		\oftype \Delta\cdot \Delta'
\endprooftree
\]
}

\noindent
In case of rule [\textsc{OrchComm}] we have:
\[(\nu k)(\namedorch{k}{\io.\orchF} \mid \send{k^p}{e}.P \mid \receive{k^{\Dual{p}}}{x}.Q)
 ~\lts{}~  
(\nu k)(\namedorch{k}{\orchF} \mid P \mid Q\Subst{v}{x}).\]
Let us suppose without loss of generality that $p = -$ so that $\Dual{p} = +$. 
By hypothesis we have a derivation ending by:
{\small
\[
\prooftree
	\prooftree
		\Gamma \bvdash \send{k^-}{e}.P \oftype \Delta'' \cdot \; k^-:S
		\quad
		\Gamma \bvdash \receive{k^{+}}{x}.Q \oftype \Delta''' \cdot\;  k^+:S'
	\justifies
		\Gamma \bvdash \send{k^-}{e}.P \mid \receive{k^{+}}{x}.Q \oftype 
			\Delta \cdot \; k^-:S\; \cdot \; k^+:S'
	\endprooftree
\justifies
	\Gamma \bvdash (\nu k)(\namedorch{k}{\io.\orchF} \mid \send{k^p}{e}.P \mid \receive{k^{\Dual{p}}}{x}.Q)
		\oftype \Delta
\endprooftree
 \]
}
for some $S,S'$ such that $\orchComply{\io.\orchF}{S}{S'}$, and $\Delta'', \Delta'''$ such that $\Delta = \Delta'' \cdot \Delta'''$.
From the derivability of $\Gamma \bvdash \send{k^-}{e}.P \oftype \Delta'' \cdot \; k^-:S$ we deduce
that $\Gamma \vdash e:G$ for some ground $G$, 
$\Gamma \bvdash P \oftype \Delta'' \cdot \; k^-:S''$ and that $S = \;\outputtype{G}.S''$ for some $S''$.
Similarly we know that $\Gamma \vdash x:G'$ for some $G'$, 
$\Gamma, x:G' \bvdash Q \oftype \Delta''' \cdot\;  k^+:S'''$ and that $S' = \; \inputtype{G'}.S'''$.

Now from  this and the fact that $\orchComply{\io.\orchF}{\outputtype{G}.S''}{\inputtype{G'}.S'''}$ we infer
by Definition \ref{def:orchCompliance} that $G$ and $G'$ must be the same and that $\orchComply{\orchF}{S''}{S'''}$. Hence we have the derivation:
{\small
\[
\prooftree
	\prooftree
		\Gamma \bvdash P \oftype \Delta'' \cdot \; k^-:S'' 
		\quad
		\Gamma \bvdash Q\Subst{v}{x} \oftype \Delta''' \cdot \;k^+:S'''
	\justifies
		\Gamma \bvdash P \mid Q\Subst{v}{x} \oftype \Delta \cdot \;k^-:S'' \;\cdot\;k^+:S'''
	\endprooftree
\justifies
	\Gamma \bvdash (\nu k)(\namedorch{k}{\orchF} \mid P \mid Q\Subst{v}{x}) \oftype \Delta
\endprooftree
\]
}
where $\Gamma \bvdash Q\Subst{v}{x} \oftype \Delta''' \cdot \;k^+:S'''$ follows by $\Gamma \vdash e:G$, derivability of
$\Gamma, x:G \bvdash Q \oftype \Delta''' \cdot\;  k^+:S'''$, that $e\downarrow v$ and  Lemma \ref{lem:subst}.

\noindent In case of [\textsc{OrchDeleg}] we have:
\[(\nu k)(\namedorch{k}{\io.\orchF} \mid \throw{k^p}{{k'}^q}{P} \mid \catch{k^{\Dual{p}}}{{k'}}{Q})
 ~\lts{}~ 
(\nu k)(\namedorch{k}{\orchF} \mid P \mid Q\Subst{k'^q}{k'})\]
Then by hypothesis there exists the derivation:
{\small
\[
\prooftree
	\prooftree
		\prooftree
			\Gamma \bvdash P \oftype \Delta'  \cdot \; k^p:S_1
		\justifies 
			\Gamma \bvdash \throw{k^p}{{k'}^q}{P} \oftype \Delta'  \cdot \;  k'^q: S' \cdot\; k^p: \outputtype{S'^q}S_1
		\endprooftree
		\quad
		\prooftree
			\Gamma \bvdash Q\Subst{k'^r}{k'} \oftype \Delta''  \cdot \;  k'^r: S'' \cdot\; k^{\Dual{p}}:S_2
		\justifies
			\Gamma \bvdash \catch{k^{\Dual{p}}}{{k'}}{Q} \oftype \Delta''  \cdot \;  k^{\Dual{p}}: \inputtype{S''^r}S_2
		\endprooftree
	\justifies
		\Gamma \bvdash \throw{k^p}{{k'}^q}{P} \mid \catch{k^{\Dual{p}}}{{k'}}{Q} 
			\oftype \Delta \cdot \;  k'^q: S' \cdot\; k^p: \outputtype{S'^q}S_1  \cdot\; k^{\Dual{p}}: \inputtype{S''^r}S_2
	\endprooftree
	%\quad \orchComply{\io.\orchF}{\outputtype{S'^q}S_1}{ \inputtype{S''^r}S_2}
\justifies
	\Gamma \bvdash (\nu k)(\namedorch{k}{\io.\orchF} \mid \throw{k^p}{{k'}^q}{P} \mid \catch{k^{\Dual{p}}}{{k'}}{Q}) 
		\oftype \Delta \cdot \; k'^q: S'
\endprooftree
\]
}
where 
%$Q' = Q\Subst{k'}{k'^r}$,
$\Delta' \cdot \Delta'' = \Delta$; further, assuming w.l.o.g. that  $p = +$ and hence that $\Dual{p} = -$, we have
$\orchComply{\io.\orchF}{\outputtype{S'^q}S_1}{ \inputtype{S''^r}S_2}$, which implies that $S'' = S'$, $r = q$, and
$\orchComply{\orchF}{S_1}{S_2}$. Therefore there exists the derivation:
{\small
\[
\prooftree
	\prooftree
		\Gamma \bvdash P \oftype \Delta'  \cdot \; k^p:S_1
		\quad
		\Gamma \bvdash Q\Subst{k'^q}{k'} \oftype \Delta''  \cdot \;  k'^q: S' \cdot\; k^{\Dual{p}}:S_2
	\justifies
		\Gamma \bvdash P \mid Q\Subst{k'^q}{k'} \oftype \Delta \cdot \;  k'^q: S' \cdot\; k^p: S_1  \cdot\; k^{\Dual{p}}: S_2
	\endprooftree
\justifies
	\Gamma \bvdash (\nu k)(\namedorch{k}{\orchF} \mid P \mid Q\Subst{k'^q}{k'}) \oftype \Delta \cdot \; k'^q: S'
\endprooftree
\]
}
where the fact that $\Delta \cdot \;  k'^q: S' \cdot\; k^p: S_1  \cdot\; k^{\Dual{p}}: S_2$ is well defined follows by the fact
that $\Delta \cdot \;  k'^q: S' \cdot\; k^p: \outputtype{S'^q}S_1  \cdot\; k^{\Dual{p}}: \inputtype{S''^r}S_2$ is such.

\noindent
Let us consider the case
[\textsc{OrchSSel}]:
\[
(\nu k)(\namedorch{k}{\orchOplus_{h\in H}l_h.\orchF_h} \mid
\retrselop{k^p}{l_j:P_j}_{j\in J} \mid \branch{k^{\Dual{p}}}{\Set{l_i{:}Q_i}_{i\in I})} 
 ~\lts{}~
(\nu k)(\namedorch{k}{\orchF_c} \mid P_c \mid Q_c)
\]
where $c \in H\cup I\cup J$. By hypothesis we have a derivation ending by:
{\small
\[
\prooftree
	\Gamma \bvdash 
			\retrselop{k^p}{l_j:P_j}_{j\in J} \mid \branch{k^{\Dual{p}}}{\Set{l_i{:}Q_i}_{i\in I}} 
				\oftype \Delta \cdot \; k^p:\retrseltype{l_{j}{:}S_j}_{j\in J} \;\cdot\; k^{\Dual{p}}:\branchtype{l_{i}{:}S'_i}_{i\in I}
\justifies
	\Gamma \bvdash (\nu k)(\namedorch{k}{\orchOplus_{h\in H}l_h.\orchF_h} \mid
			\retrselop{k^p}{l_j:P_j}_{j\in J} \mid \branch{k^{\Dual{p}}}{\Set{l_i{:}Q_i}_{i\in I})} 
				\oftype \Delta
\endprooftree
\]
}
with the side condition $\orchComply{\orchOplus_{h\in H}l_h.\orchF_h}{\retrseltype{l_{j}{:}S_j}_{j\in J}}{\branchtype{l_{i}{:}S'_i}_{i\in I}}$. 
By Definition \ref{def:orchCompliance} this implies that $H \subseteq J \cap I$ is a non empty set such that for all $h\in H$ it holds
$\orchComply{\orchF_h}{S_h }{S'_h}$. The premise of the last inference in the above derivation must be derived by rule [\textsc{Conc-T}]
from
{\small 
\[
\begin{array}{c@{\hspace{6mm}}c@{\hspace{6mm}}c}
\prooftree
	\forall j \in J \quad \Gamma \bvdash P_j \oftype \Delta' \cdot \; k^p:S_j
\justifies
	\Gamma \bvdash \retrselop{k^p}{l_j:P_j}_{j\in J} \oftype 
		\Delta' \cdot \; k^p:\retrseltype{l_{j}{:}S_j}_{j\in J}
\endprooftree
&
\text{and}
&
\prooftree
	\forall i \in I \quad \Gamma \bvdash Q_i \oftype \Delta'' \cdot \; k^{\Dual{p}}:S'_i
\justifies
	\Gamma \bvdash \branch{k^{\Dual{p}}}{\Set{l_i{:}Q_i}_{i\in I}} \oftype \Delta'' \cdot \; k^{\Dual{p}}: \branchtype{l_{i}{:}S'_i}_{i\in I}
\endprooftree
\end{array}
\]
}
where $\Delta = \Delta'\cdot\Delta''$ is defined. Now since $c\in H \subseteq J\cap I$, we know that $\orchComply{\orchF_c}{ S_c }{ S'_c}$ so that
from the above we obtain the derivation:
{\small 
\[
\prooftree
\prooftree
		\Gamma \bvdash P_c \oftype \Delta' \cdot \; k^p:S_c
		\quad
		\Gamma \bvdash Q_c \oftype \Delta'' \cdot \; k^{\Dual{p}}:S'_c
	\justifies
		\Gamma \bvdash  P_c \mid Q_c \oftype \Delta \cdot \; k^p:S_c \;\cdot\; k^{\Dual{p}}:S'_c
	\endprooftree
\justifies
	\Gamma \bvdash (\nu k)(\namedorch{k}{\orchF_c} \mid P_c \mid Q_c) \oftype \Delta
\endprooftree
\]
}
The case of rule [\textsc{OrchSel}] is similar and simpler. Finally rules [\textsc{Par}], [\textsc{Scop}] and [\textsc{Str}]
follow by induction and Lemma \ref{lem:subjcongr}.
\end{proof}

\begin{corollary} [Error freeness] \label{cor:errorFree}
If $P$ is a user-defined process such that $\Gamma \bvdash P \oftype \Delta$ for some
$\Gamma$ and $\Delta$,
and $R$ is a run-time process such that $P \lts{*} R$, then $R$ is not an error.
\end{corollary}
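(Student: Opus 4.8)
The plan is a routine induction on the length of the reduction sequence $P \lts{*} R$, using Subject Reduction (Theorem \ref{thm:subjectReduction}) to propagate typability along the chain and Lemma \ref{lem:typablenoerr} to conclude at the end.

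First I would unfold $\lts{*}$ as the reflexive--transitive closure of $\lts{}$, so that $P \lts{*} R$ amounts to a finite chain $P = R_0 \lts{} R_1 \lts{} \cdots \lts{} R_n = R$ for some $n \ge 0$. The statement to prove by induction on $n$ is that $\Gamma \bvdash R \oftype \Delta$ whenever $P \lts{*} R$ in $n$ steps. The base case $n = 0$ is precisely the hypothesis $\Gamma \bvdash P \oftype \Delta$. For the inductive step, write $P \lts{*} R' \lts{} R$ with the first part having $n$ steps; by the induction hypothesis $\Gamma \bvdash R' \oftype \Delta$, and since Theorem \ref{thm:subjectReduction} is stated uniformly for user-defined and run-time processes, applying it to $R' \lts{} R$ gives $\Gamma \bvdash R \oftype \Delta$.

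Having established $\Gamma \bvdash R \oftype \Delta$, the conclusion is immediate from Lemma \ref{lem:typablenoerr}: a process typable under some context and typing is not an error, hence $R$ is not an error.

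The hard part is essentially absent here, since all the real work has already been done in Theorem \ref{thm:subjectReduction} (whose proof in turn relies on Lemmas \ref{lem:subst} and \ref{lem:subjcongr}) and in Lemma \ref{lem:typablenoerr}. The only point worth a line of care is that both of those results are phrased to apply to arbitrary run-time processes, not merely to the initial user-defined process $P$; this is exactly what licenses iterating them along the whole reduction chain. One could equally run the induction from the front --- peel off $P \lts{} R_1$, retype $R_1$ by Subject Reduction, and apply the induction hypothesis to $R_1 \lts{*} R$ --- with the same outcome.
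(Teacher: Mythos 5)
Your proposal is correct and follows essentially the same route the paper intends: the corollary is stated as an immediate consequence of Theorem \ref{thm:subjectReduction} together with Lemma \ref{lem:typablenoerr}, obtained by iterating subject reduction along the finite reduction chain and then invoking the lemma on the final typable process. Your remark that both results apply to arbitrary (run-time) processes, which is what justifies the iteration, is exactly the right point of care.
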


Notice that, whereas our type system prevents deadlocks like the last one in 
Example \ref{ex:errorexamples}, it cannot prevent deadlocks due to the presence of subterms
like $(\nu k)(\send{k^+}{e}.R \mid R' \mid \namedorch{k}{\stopf})$, where $R'$ is not a $k^-$- process (see Example \ref{ex:complydependentdeadlock} and Definition \ref{def:compldeperr} below). This sort of deadllocks are intrinsecally due
to the asymmetric nature of our compliance relation. They can be avoided either by using
a less general compliance relation (namely a symmetric restriction of the present one), or
by extending  the operational semantics, as we shall do in the next section.

%!TEX root = ICE-orchSessions-Main.tex

\section{$\comply$\,-dependent deadlocks and clean-up reductions}
\label{sec:cleanup}

Since in our setting session compatibility, unlike e.g duality, is asymmetric (compliance is a ''client biased'' relation), we have to face additional problems
related to peculiar deadlock states that we dub $\comply$\,-dependent deadlocks. 
These are due to actions on the server side that can remain unmatched after the client has reached 
a successful state, as shown in the following example. 

\begin{example}
\label{ex:complydependentdeadlock}
{\em 
\label{ex:vacorchdeadlock}
Let us consider the following processes:
$$
\begin{array}{l@{\hspace{8mm}}l}
P\ =\ \typedrequest{a}{\mathsf{S}_a}{k}{
                            \send{k}{4}.\typedrequest{b}{\mathsf{S}_b}{k'}{ \send{k'}{\text{True}}  } 
                                                                       }
&

Q\ =\ \typedaccept{c}{\mathsf{S}_c}{k}{
                            \receive{k}{x}.\send{k}{x}.\typedaccept{d}{\mathsf{S}_d}{k'}{ \receive{k'}{y}  } }

\end{array}
$$

where \hspace{4mm}
$
\mathsf{S}_a =\   \outputtype{\mathtt{Nat}}\hspace{6mm}
\mathsf{S}_b =\ \outputtype{\mathtt{Bool}}\hspace{6mm}
\mathsf{S}_c =\ \inputtype{\mathtt{Nat}}.\outputtype{\mathtt{Nat}}\hspace{6mm}
\mathsf{S}_d =\   \inputtype{\mathtt{Bool}}
$

\noindent
It is not difficult to see that the system $(P\, \mid\, Q)$
does type check and that its
% since $\mathsf{S}_a \comply \mathsf{S}_c$ and $\mathsf{S}_b \comply \mathsf{S}_d$.\\
evolution proceeds as follows:
The first session can be opened, since 
$\io.\stopf: \mathsf{S}_a \comply \mathsf{S}_c$, and we get\\
\centerline{
$(\nu k)(~\namedorch{k}{\io.\stopf} ~\mid~ \send{k^-}{4}.\typedrequest{b}{\mathsf{S}_b}{k'}{ \send{k'}{\text{True}} } ~\mid~ \receive{k^+}{x}.\send{k^+}{x}.\typedaccept{d}{\mathsf{S}_d}{k'}{ \receive{k'}{y}  }~)$
}
Now the orchestrator enables the input/output interaction on channel $k$, letting the system evolve to\\
\centerline{
$(\nu k)(~\namedorch{k}{\stopf} ~\mid~ \typedrequest{b}{\mathsf{S}_b}{k'}{ \send{k'}{\text{True}} } ~\mid~ \send{k^+}{4}.\typedaccept{d}{\mathsf{S}_d}{k'}{ \receive{k'}{y}  }~)$
}
The behaviour described by $\mathsf{S}_a$ has now been completed, but the $\send{k^+}{4}$ operation on the server side now prevents the system to progress.
}
\end{example}

We call $\comply$\,-dependent deadlock a stuck state like the one described in the above example. 

\begin{definition}
\label{def:compldeperr}
A process $P$ is a $\comply$\,-dependent deadlock if $P\notlts{}$ and it contains a subterm $R$
of the form\\
% one of the following holds
%\begin{enumerate}[a)]
%\item
%$R' = \namedorch{k}{\stopf}$ and it is part of a potential {\em $k$-redex};
%\item
\centerline{
$R = (\nu k)(\namedorch{k}{\stopf} \mid R' \mid R'')$ ~~~~where $R'$ is a  $k^+$-process.}
%\end{enumerate} 
\end{definition}

In order to avoid the above sort of deadlock states without modifying our notion of compliance, we extend the operational
semantics with some extra reductions for orchestrators like $\namedorch{k}{\stopf}$, i.e. the orchestrator that 
succesfully completed all the actions requested by the client on channel $k$.
In particular
the extra reductions formalized in Figure \ref{fig:cleanupred} allow an orchestrator $\namedorch{k}{\stopf}$ to vacuously satisfy all the synchronization actions on channel $k$ coming from the server,
namely those having $k^+$ in their prefix. We call them {\bf clean-up reductions}

\begin{figure}
\hrule
\[\begin{array}{l}
\mathax{\textsc{OrchClnUp1}}{
(\nu k)(\namedorch{k}{\stopf} ~\mid~ \pi.R' ~\mid~ R)~\lts{}~ (\nu k)(\namedorch{k}{\stopf} ~\mid~ R' ~\mid~ R)  ~~~~~~\textrm{where~} \pi {\in} \Set{\send{k^+}{e}, \receive{k^+}{x}, \select{k^+}{l}} 
}
\\[2mm]

\mathax{\textsc{OrchClnUp2}}{
(\nu k)(\namedorch{k}{\stopf} ~\mid~ \branch{k^+}{\Set{l_i{:}R'_i}_{i\in I}} ~\mid~ R)~\lts{}~ (\nu k)(\namedorch{k}{\stopf} ~\mid~ R'_c ~\mid~ R)
~~~~~~\textrm{if~} c\in I
}
\\[2mm]

\mathax{\textsc{OrchClnUp3}}{
(\nu k)(\namedorch{k}{\stopf} ~\mid~  \prtysel{k^+}{[l_i{:}Q_i]_{i\in I}} ~\mid~ R')~\lts{}~ (\nu k)(\namedorch{k}{\stopf} ~\mid~ R'_c ~\mid~ R) 
~~~~~~\textrm{if~} c\in I
}
\\ 
\end{array}
\]
\hrule
\caption{Clean-up reductions}\label{fig:cleanupred}
\end{figure}

By adding the clean-up  reductions to the operational semantics it is not difficult to check that the subject reduction property (Theorem \ref{thm:subjectReduction}) still holds. Out of that 
we get the following extension of Corollary \ref{cor:errorFree}.
\begin{corollary} [Error freeness] \label{cor:errorFreeExt}
If $P$ is a user-defined process such that $\Gamma \bvdash P \oftype \Delta$ for some
$\Gamma$ and $\Delta$,
and $R$ is a run-time process such that $P \lts{*} R$ (where also clean-up reductions are considered) then $R$ is neither an error nor a $\comply$\,-dependent deadlock.
\end{corollary}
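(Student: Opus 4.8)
The plan is to reduce Corollary~\ref{cor:errorFreeExt} to two ingredients. The first is that subject reduction (Theorem~\ref{thm:subjectReduction}) still holds for the operational semantics \emph{augmented} with the clean-up reductions of Figure~\ref{fig:cleanupred}; granting this, iterating subject reduction along $P \lts{*} R$ gives $\Gamma \bvdash R \oftype \Delta$, whence $R$ is not an error by Lemma~\ref{lem:typablenoerr} -- which is exactly how Corollary~\ref{cor:errorFree} is obtained. The second ingredient is that a \emph{typable} run-time process cannot be a $\comply$-dependent deadlock, precisely because the configuration that characterises such a deadlock is one to which a clean-up reduction always applies. Combining the two, the reached $R$ is neither an error nor a $\comply$-dependent deadlock, which is the statement.

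For the first ingredient I would extend the induction in the proof of Theorem~\ref{thm:subjectReduction} with the three new base cases [\textsc{OrchClnUp1}], [\textsc{OrchClnUp2}], [\textsc{OrchClnUp3}]; the inductive cases [\textsc{Par}], [\textsc{Scop}], [\textsc{Str}] are unchanged and rely on Lemma~\ref{lem:subjcongr}. In each new case the redex has the form $(\nu k)(\namedorch{k}{\stopf} \mid R' \mid R)$ with $R'$ a $k^+$-process, so its typing derivation ends with an instance of [\textsc{CRes-T}] whose side condition is $\orchComply{\stopf}{S_1}{S_2}$. By Definition~\ref{def:orchCompliance}\,(\ref{def:orchCompliance-i}) this forces $S_1 = \stopA$ (leaving $S_2$ arbitrary), so the body is typed with $k^-:\stopA$ and $k^+:S_2$. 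Consuming the leading action of $R'$ replaces $S_2$ by the corresponding continuation type -- $\outputtype{G}.S$ by $S$, $\branchtype{l_i{:}S_i}_{i\in I}$ by $S_c$, and so on -- which is already available as a premise of the rule that typed $R'$, while leaving $\Delta$ and the typing $k^-:\stopA$ untouched. Since $\orchComply{\stopf}{\stopA}{S}$ again holds by Definition~\ref{def:orchCompliance}\,(\ref{def:orchCompliance-i}) for \emph{any} session type $S$, rule [\textsc{CRes-T}] re-applies with the orchestrator $\stopf$, and the reduct carries the very same type $\Delta$. (In the value-receive subcase of [\textsc{OrchClnUp1}] the received variable should be taken to be instantiated by an arbitrary value of the expected ground type so that the reduct stays closed; this is accommodated by Lemma~\ref{lem:subst} and the standing assumption on the typing of expressions.)

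For the second ingredient, suppose towards a contradiction that the reached $R$ is a $\comply$-dependent deadlock: then $R \notlts{}$ and $R$ contains a subterm $(\nu k)(\namedorch{k}{\stopf} \mid R' \mid R'')$ with $R'$ a $k^+$-process. A case analysis on the leading action of $R'$ shows that one of [\textsc{OrchClnUp1}]--[\textsc{OrchClnUp3}] is applicable to this subterm, and the contextual rules [\textsc{Par}], [\textsc{Scop}], [\textsc{Str}] lift that reduction to $R$; hence $R \lts{}$, contradicting $R \notlts{}$. Therefore $R$ is not a $\comply$-dependent deadlock, which together with the previous paragraph yields the claim.

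The step I expect to require real care, rather than being routine, is checking subject reduction for the three clean-up rules \emph{together with} the fact that these rules are exhaustive over the possible leading forms of a $k^+$-process (value and channel send/receive, label selection, branching, speculative selection): it is exactly this exhaustiveness -- the property the clean-up reductions are designed to guarantee -- that drives the contradiction in the second ingredient, and it is the re-applicability of [\textsc{CRes-T}], which works only because $\stopf$ is compliant with \emph{every} session type by Definition~\ref{def:orchCompliance}\,(\ref{def:orchCompliance-i}), that makes the first ingredient go through.
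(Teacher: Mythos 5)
Your decomposition is exactly the one the paper intends: its entire argument for this corollary is the remark that subject reduction (Theorem~\ref{thm:subjectReduction}) ``still holds'' once the clean-up reductions are added, combined implicitly with Lemma~\ref{lem:typablenoerr} for the error part and with the observation that a subterm $(\nu k)(\namedorch{k}{\stopf}\mid R'\mid R'')$ with $R'$ a $k^+$-process always has a clean-up step available, so no reachable process satisfying Definition~\ref{def:compldeperr} can be stuck. Your treatment of the new subject-reduction cases is the right one: $\orchComply{\stopf}{S_1}{S_2}$ can only come from clause~(\ref{def:orchCompliance-i}) of Definition~\ref{def:orchCompliance}, forcing $k^-:\stopA$, and since $\stopf:\stopA\comply S$ for \emph{every} $S$, rule [\textsc{CRes-T}] re-applies to the continuation typing; your remark about the bound variable in the $\receive{k^+}{x}$ subcase is a genuine subtlety the paper glosses over.

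The one step that does not close as written is the exhaustiveness claim driving your second ingredient. You list channel send/receive among the leading forms of a $k^+$-process (as the definition of errors indeed does), but no rule of Figure~\ref{fig:cleanupred} handles $\throw{k^+}{{k'}^q}{P}$ or $\catch{k^+}{k'}{P}$: [\textsc{OrchClnUp1}] covers only value send, value receive and selection, and the other two rules cover branching and speculative selection. Such configurations are reachable from typable user-defined processes --- e.g.\ a client of type $\outputtype{\mathtt{Nat}}.\stopA$ against a server of type $\inputtype{\mathtt{Nat}}.\outputtype{S_1^q}.\stopA$ which delegates on $k^+$ a channel end obtained from a second session: after [\textsc{Link}] and [\textsc{OrchComm}] the orchestrator is $\stopf$, the throw has no matching catch, no clean-up rule applies, and the system can be stuck. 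So for these two prefixes your case analysis (and hence the claim that the reached $R$ is not a $\comply$-dependent deadlock) has no applicable rule. To be fair, this is an omission in the paper's Figure~\ref{fig:cleanupred} rather than in your strategy --- the surrounding prose speaks of vacuously satisfying \emph{all} $k^+$-prefixed actions, and vacuously discharging a $\catch{k^+}{k'}{Q}$ is delicate because its continuation uses the caught channel --- but as the rules stand, the exhaustiveness you (and the corollary itself) rely on holds only for the five covered prefixes; the delegation cases must either be added to the clean-up rules or shown unreachable, and the latter is not true.
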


%Besides, for precesses that type-checks, we can guarantee the absence of 
%vacuous-orchestration deadlocks.\\

Let us see how the $\comply$\,-dependent deadlock is avoided in Example \ref{ex:vacorchdeadlock}.

\begin{example}
{\em
Once we get to 
$$
(\nu k)(~\namedorch{k}{\stopf} ~\mid~ \typedrequest{b}{\mathsf{S}_b}{k'}{ \send{k'}{\text{True}} } ~\mid~ \send{k^+}{4}.\typedaccept{d}{\mathsf{S}_d}{k'}{ \receive{k'}{y}  }~)
$$
the reduction {\small \textsc{OrchClnUp1}} allows $\namedorch{k}{\stopf}$ to vacuously satisfy  the action$ \send{k^+}{4}$.
So, by such a rule,  the system can evolve to\\
\centerline{$
(\nu k)(~\namedorch{k}{\stopf} ~\mid~ \typedrequest{b}{\mathsf{S}_b}{k'}{ \send{k'}{\text{True}} } ~\mid~ \typedaccept{d}{\mathsf{S}_d}{k'}{ \receive{k'}{y}  }~)
$}
and then, since $\io.\stopf : \mathsf{S}_b \comply \mathsf{S}_d$, it can progress to
$
(\nu k)(~\namedorch{k}{\stopf} ~\mid~ (\nu k')(~ \namedorch{k'}{\io.\stopf}~\mid~
 \send{k'}{\text{True}}  ~\mid~ \receive{k'}{y}  ~) ~)
$\\
and finally to 
$
(\nu k)(~\namedorch{k}{\stopf} ~\mid~ (\nu k')(~ \namedorch{k'}{\stopf}~\mid~
 \inact  ~\mid~ \inact  ~) ~).
$
}
\end{example}

\begin{remark}
As previously mentioned, the necessity of clean-up reductions is due to the asymmetric nature of  the compliance relation. $\comply$\,-dependent deadlocks would not appear in typed terms if we forced compliance to be symmetric, for instance by replacing item \ref{def:orchCompliance-i} of 
Definition \ref{def:orchCompliance} with $\stopf : \stopA \comply \stopA$.
\end{remark}

%
%!TEX root = ICE-orchSessions-Main.tex
\section{Deterministic-orchestrators and priority choices}
\label{sec:priorities}

As recalled in the introduction and discussed in e.g. \cite{Padovani10,BvBdL17},
the orchestration process should not 
exhibit any internal nondeterminism. In the present section, we stand by such a 
viewpoint and consider only deterministic orchestrators.
\begin{definition}[Deterministic orchestrators]\hfill\\
An orchestrator $f\in\Orch$ is {\em deterministic} if it does not contain any occurrence of the
$\oplus$ operator.
\end{definition}
Deterministic orchestrators can be used to select exactly one safe option in a construct like
$\retrselop{k}{l_i{:}P_i}_{i\in I}$ according to a given priority ordering among the 
failure-amenable options $\Set{l_i}_{i\in I}$. 
The priority ordering can be explicitely specified in the speculative types, which we 
interpret now as {\em speculative types with priorities}.
\begin{definition}[Types with priorities]
We modifiy the set $S$ of session types in Def. \ref{def:types} by replacing 
$\retrseltype{l_i{:}S_i}_{i\in I}$~ by ~~$\prtyretrseltype{l_1{:}S_1,\ldots,l_n{:}S_n}$, with $n\geq 1$.\\
The option represented by the label $l_i$ ($1\leq i\leq n-1$) is assumed to have higher priority than
the one represented by $l_{i+1}$.
\end{definition}

The process calculus remains unchanged, as well as the operational semantics, but for 
rule {\small [\textsc{Link}]}, which we now replace by 
\[\begin{array}{l}
 \mathax{\textsc{LinkPT}}{\typedrequest{a}{S}{k}{P} ~\mid~ \typedaccept{b}{S'}{k}{Q}  ~\lts{}~  
 (\nu k)(\namedorch{k}{\mathsf{f}} ~\mid~ P\Subst{k^-}{k} ~\mid~ Q\Subst{k^+}{k} )
\hspace{4mm}\textrm{if~~} \mathsf{f}=\Synth(S,S')\neq\FAIL  }
\end{array}
\]
where $\Synth$ is the orchestrator synthesis algorithm described in Figure \ref{fig:algSynth} and for which the following holds.
\begin{lemma}
\begin{enumerate}[i)]
\item
$\mathbf{Synth}(S,S')\neq\mathbf{fail}$\hspace{2mm} iff\hspace{2mm} $S\comply S'$;
\item
$\mathsf{f}=\mathbf{Synth}(S,S')\neq\mathbf{fail}$\hspace{2mm} implies \hspace{2mm}$\mathsf{f}:S\comply S'$ and $\mathsf{f}$ is deterministic.
\end{enumerate}
\end{lemma}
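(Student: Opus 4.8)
The plan is to prove both statements by induction on the structure of the pair $(S,S')$, following the recursive definition of $\Synth$ in Figure~\ref{fig:algSynth} and matching each case against the corresponding clause of Definition~\ref{def:orchCompliance}. The key observation is that $\Synth$ is essentially a deterministic refinement of the (already decidable, by Proposition~\ref{prop:compOrchestrator}) compliance check: wherever Definition~\ref{def:orchCompliance} allows a choice of subset $H\subseteq I\cap J$ in clause~\ref{def:orchCompliance-iv}, the synthesis algorithm must commit to a single label, namely the one of highest priority according to the ordering in $\prtyretrseltype{l_1{:}S_1,\ldots,l_n{:}S_n}$ whose continuation is itself compliant; and wherever Definition~\ref{def:orchCompliance} clause~3 allows selecting a subset $I$, the algorithm reproduces exactly the labels offered by the selection side. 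Since the priority ordering is finite and linear, ``highest priority whose continuation is compliant'' is well-defined, and the recursive calls are on strictly smaller types, so termination is immediate.

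For part~(i), the forward direction ($\Synth(S,S')\neq\FAIL \Rightarrow S\comply S'$) follows by showing, by induction, that whenever $\Synth$ returns an orchestrator $\orchF$, that very $\orchF$ witnesses $\orchComply{\orchF}{S}{S'}$ — so part~(ii)'s first clause actually subsumes this direction. For the converse ($S\comply S' \Rightarrow \Synth(S,S')\neq\FAIL$), I would argue contrapositively or directly: assume $\orchComply{\orchG}{S}{S'}$ for some $\orchG$ and show by induction on the derivation of this judgement that the algorithm cannot fail. The only cases where failure is a priori possible are the speculative-selection clauses: here $\orchComply{\orchG}{\prtyretrseltype{l_i{:}S_i}_{i}}{\branchtype{l_j{:}S'_j}_{j\in J}}$ gives a nonempty $H\subseteq I\cap J$ with $\orchComply{\orchF_h}{S_h}{S'_h}$ for each $h\in H$; in particular $I\cap J\neq\emptyset$ and at least one label has a compliant continuation (using the IH on each $h\in H$ to know the recursive $\Synth$ call succeeds), so the algorithm's scan for the highest-priority compliant label succeeds. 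All other clauses are deterministic with no side conditions that compliance does not already guarantee, so the induction goes through routinely.

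For part~(ii), given $\mathsf{f}=\Synth(S,S')\neq\FAIL$, both conclusions — $\mathsf{f}:S\comply S'$ and $\mathsf{f}$ deterministic — are proved simultaneously by induction on the structure of $(S,S')$. Determinism is the easy half: inspecting Figure~\ref{fig:algSynth}, $\Synth$ only ever emits $\stopf$, an $\io$-prefix, a selection prefix, or an external choice $\orchPlus$ built from recursively-synthesized (hence, by IH, $\oplus$-free) components; it never emits an $\oplus$, precisely because in the speculative case it commits to a single highest-priority label rather than forming an internal choice. For $\mathsf{f}:S\comply S'$ one checks each case against Definition~\ref{def:orchCompliance}: e.g.\ in the speculative-left case $\Synth$ picks the highest-priority $c\in I\cap J$ with $\Synth(S_c,S'_c)\neq\FAIL$, sets $\mathsf{f}=l_c.\mathsf{f}_c$ where $\mathsf{f}_c=\Synth(S_c,S'_c)$, and by IH $\mathsf{f}_c:S_c\comply S'_c$; taking $H=\{c\}$ in clause~\ref{def:orchCompliance-iv} (note $H=\{c\}$ is a singleton, so $\orchOplus_{h\in H}l_h.\orchF_h$ is just the selection prefix $l_c.\mathsf{f}_c$) gives exactly $\orchComply{l_c.\mathsf{f}_c}{\prtyretrseltype{l_i{:}S_i}_{i\in I}}{\branchtype{l_j{:}S'_j}_{j\in J}}$.

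The main obstacle I anticipate is not conceptual but a matter of faithfully reading off the exact behaviour of $\Synth$ from Figure~\ref{fig:algSynth} (which lies just beyond this excerpt) and verifying that its handling of the priority list — in particular, that it really does select the \emph{highest}-priority label with a compliant continuation, and that it correctly threads the priority order through nested speculative selections — lines up with the unordered clause~\ref{def:orchCompliance-iv} of Definition~\ref{def:orchCompliance}. A subsidiary subtlety is that in clause~3 of Definition~\ref{def:orchCompliance} the orchestrator's index set $I$ must coincide with the selection side's label set while the branching side may offer a superset $I\cup J$; one must check $\Synth$ emits an external choice over exactly the right index set, neither too few labels (which would fail compliance) nor labels absent on one side (which would be ill-formed). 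Once the shape of $\Synth$ is pinned down, each inductive case is a short direct verification.
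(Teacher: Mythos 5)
Your proposal is correct and follows essentially the same route as the paper, which merely remarks that the lemma is ``easy by definition of $\Synth$'' because the algorithm's clauses mirror, case by case, the clauses of Definition~\ref{def:orchCompliance}; your structural induction, with the speculative case handled via a singleton $H=\{c\}$ in clause~\ref{def:orchCompliance-iv} and determinism observed from the absence of emitted $\oplus$, is exactly the intended elaboration of that remark. The subtleties you flag (priority scanning via the auxiliary $\SynthL$, and the index-set discipline in the selection/branching case) are precisely the points one must read off Figure~\ref{fig:Synth}, and they resolve as you anticipate.
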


The proof of the above lemma is easy by definition of $\Synth$, which in turn can be easily obtained from that of orchestrated compliance.
Moreover, by construction, it is possible to show that the synthesized orchestrator always chooses the highest-priority option for {\em speculative types with priorities}.

\vspace{-2mm}
\paragraph{Process-level-specified priorities}
As mentioned in the introduction, it is possible to specify the priorities among failure-amenable
options at the process level instead of at the type level.
In order to do that, there is no need to change the set of types as provided in Definition \ref{def:types}.
We change instead the speculative selection operator.

\begin{definition}[Processes with priorities]
We modifiy the set $P$ of Processes in Def. \ref{def:processes} by replacing 
$\retrselop{k}{l_i{:}P_i}_{i\in I}$~ by ~~$\prtyretrselop{k}{l_1{:}P_1,\ldots,l_n{:}P_n}$, with $n\geq 1$.\\
The option represented by the label $l_i$ ($1\leq i\leq n-1$) is  assumed to have higher priority than
the one represented by $l_{i+1}$.
\end{definition}

The  operational semantics remains unchanged, but for the
rules {\small [\textsc{Link}]} and  {\small [\textsc{OrchSSel}]} which are now replaced by 
\[\begin{array}{l}
 \mathax{\textsc{LinkPP}}{\typedrequest{a}{S}{k}{P} ~\mid~ \typedaccept{b}{S'}{k}{Q}  ~\lts{}~  
 (\nu k)(\namedorch{k}{\mathsf{f}} ~\mid~ P\Subst{k^-}{k} ~\mid~ Q\Subst{k^+}{k} )
}
\\
\multicolumn{1}{r}{\textrm{if~} \mathsf{f}=\SynthUD(S,S')\neq\FAIL}\\

 \mathax{\textsc{OrchSSelPP}}{
(\nu k)(\namedorch{k}{\text{\small $\bigoplus$}_{i\in I}\mathsf{f}_i} ~\mid~\prtyretrselop{k}{l_1{:}P_1,\ldots,l_n{:}P_n} ~\mid~ \branch{k}{\Set{l_j{:}Q_j}_{j\in J}})
 ~\lts{}~
(\nu k)(\namedorch{k}{\mathsf{f}_c} ~\mid~ P_m ~\mid~ Q_c)
}\\
\multicolumn{1}{r}{\textrm{if~} c {\in} I\cap J,\, 1\leq m\leq n \textrm{~and~} l_c=l_m, \textrm{~where~}
m=\text{min}\Set{h\mid l_m\in\Set{l_i}_{i\in I} }   }
\end{array}
\]
where $\SynthUD$ returns the (possibly) non deterministic orchestrator exhibiting all 
safe options for speculative selection choices.
It can be easily obtainined out of $\Synth$ by replacing  the
third $\ELSE$ clause in Figure \ref{fig:algSynth} by the clause
{\small
\begin{tabbing}
 \ELSE ~~ \IF\ \= ($S = \branchtype{l_i{:}S_i}_{i\in I}$ \AND\ $S' = \prtyretrseltype{l_j{:}S'_j}_{j\in J} $) \OR\  ($S =\prtyretrseltype{l_j{:}S_j}_{j\in J} $ \AND\ $S' =  \branchtype{l_i{:}S'_i}_{i\in I}$) \ \\ 
	 \>\THEN\ \LET\ \=  $\text{res} = \SynthA\,(\List{S_h\comply S'_h}_{h\in I\cap J})$\ \\ %[2mm]
	\> \>\IN\ \=\IF\ $\text{res}\neq$\ \List{} \ \= \THEN\  $l_{c_1}\Actdot \mathsf{f}_1\orchOplus\ldots \orchOplus l_{c_n}\Actdot \mathsf{f}_n $ \WHERE\ $\text{res} = [(\mathsf{f}_1,c_1),\ldots(\mathsf{f}_n,c_n)]$\\
\> \>\>\>\ELSE\ \FAIL
\end{tabbing}
}
\noindent where $\SynthA$ is defined by
{\small
\begin{tabbing}
$\SynthA(\List{\,})$ = \List{}  \\ [1mm]

$\SynthA$\=$((S_c\comply S'_c)\cons\xs )$ = \\
\>\LET\ \=$\mathsf{f}=\SynthUD(S_c \comply S'_c)$ 
 \IN\hspace{2mm} (\IF\ $\mathsf{f}=\ $\FAIL\ \THEN\ $\SynthA(\xs )$ \ELSE\ $(\mathsf{f},c)\cons  \SynthA(\xs )$) 
\end{tabbing}
}

\begin{remark}
The algorithms {\em \Synth} and {\em $\SynthUD$} can be adapted to the case of recursive orchestrators and types
following the treatment of recursion in \cite{BvBdL17}.
\end{remark}

\begin{figure}[ht]
\label{fig:algSynth}
\hrule
\vspace{2mm}
{\small
\begin{tabbing}
\Synth$(S\comply S')$\hspace{1mm} =\hspace{1mm} \=
\IF\ $S = \stopA$ \THEN\ $\stopf$ 
	\\ [2mm]

\>\ELSE \; \= \IF\ \= ($S = \inputtype{G}.S_1
$ \AND\ $S' = \outputtype{G}.S'_1$) \OR\  ($S = \outputtype{G}.S_1
$ \AND\ $S' = \inputtype{G}.S'_1$)  \ \\  [1mm]
\>	\> \>\THEN\ \LET\ \=  $\mathsf{f} = \Synth\,(S_1\comply S'_1)$\ \\ %[2mm]
\>	\>\> \>\IN\ \IF\ $\mathsf{f}\neq$\ \FAIL\ \THEN\  $\io \Actdot \mathsf{f} $ \ELSE\ \FAIL
	\\ [2mm]

\>\ELSE \> \IF\ \= ($S = \inputtype{S_1^p}.S_2
$ \AND\ $S' = \outputtype{{S}^p_1}.S'_2$) \OR\  ($S = \outputtype{{S}^p_1}.S_2
$ \AND\ $S' = \inputtype{{S}_1^p}.S'_2$)  \ \\[1mm]
\>	\> \>\THEN\ \LET\ \= $\mathsf{f}= \Synth(S_2\comply S'_2)$\\
\>\>\>\>\IN\ \IF\= \   $\mathsf{f}\neq\FAIL$\  \THEN\  $\io \Actdot \mathsf{f} $ \ELSE\ \FAIL\\[2mm]

% %[2mm]
%	  ($\AND\ S_1\subcontrF_- S'_1 )\ $ \THEN\  $\rec x. \io \Actdot \mathsf{f} $ \ELSE\ \FAIL\\
%	 \IF\= \  ($\mathsf{f}\neq\FAIL\ \AND\ S_1\subcontrF_+ S'_1)\ $ \THEN\  $\rec x. \hoprefix{\mathsf{g}} \Actdot \mathsf{f} $ \ELSE\ \FAIL
%	\\ [2mm]

\> \ELSE \> \IF\ \= ($S = \branchtype{l_i{:}S_i}_{i\in I}$ \AND\ $S' = \prtyretrseltype{l_j{:}S'_j}_{j\in J} $) \OR\  ($S =\prtyretrseltype{l_j{:}S_j}_{j\in J} $ \AND\ $S' =  \branchtype{l_i{:}S'_i}_{i\in I}$) \ \\  [1mm]
\>	\> \>\THEN\ \LET\ \=  $\text{res} = \SynthL\,(\List{S_h\comply S'_h}_{h\in I\cap J})$\\ %[2mm]
\>	\>\> \>\IN\ \=\IF\ $(\text{res}=$\ \FAIL) \THEN\  \FAIL\ \ELSE\ $ l_c \Actdot \mathsf{f} $ \WHERE\ $(\mathsf{f},c) = \text{res}$
	\\ [2mm]

%%%% \oplus.+
\>\ELSE \> \IF\ \= ($S = \selectiontype{l_i{:}S_i}_{i\in I}$ \AND\ $S' = \branchtype{l_j{:}S'_j}_{j\in J} $) \OR\ ($S = \branchtype{l_j{:}S_j}_{j\in J}$ \AND\ $S' = \selectiontype{l_i{:}S'_i}_{i\in I} $)
			  \\ %[2mm]
\>	 \>\> \THEN\ \LET\ \=  $\forall i\in I. \mathsf{f}_i =  \Synth(S_i, S'_i)$
\  \\ %[2mm]
\>\>\> \>\IN\ \IF\ $\forall i\in I. \mathsf{f}_i \neq\ $ \FAIL\ \THEN\ $ \orchPlus_{i\in I} l_i.\mathsf{f}_i  $ \ELSE\ \FAIL \\ [2mm]

\>\ELSE \> $\FAIL$\\[2mm]

\>\hspace{-4mm}\WHERE \hspace{2mm}
\>$\SynthL(\List{\,})$ = \FAIL  \\ [1mm]

\>\>$\SynthL((S_c\comply S'_c)\cons\xs )$ = \LET\ \=$\mathsf{f}=\Synth(S_c \comply S'_c)$  \IN\ (\IF\ $\mathsf{f}\neq\ $\FAIL\ \THEN\ $(\mathsf{f},c)$ \ELSE\ $\SynthL(\xs )$)   
\end{tabbing}
\hrule
\vspace{-2mm}
}\caption{The algorithm \Synth.}\label{fig:Synth}
\vspace{4mm}

\end{figure}

%\input{ICE-orchSessions-Subtyping}
%
%!TEX root = ICE-orchSessions-Main.tex
\section{Conclusions and future work} \label{sec:conclusions}
\hspace{-2mm}
We have defined a type system with binary session types for a calculus with orchestrated interactions. The condition for session opening is here more permissive than in usual calculi  with session types in that the types of the processes participating in a session have just to be {\em compliant} rather than dual to each other.
The relation of compliance stems from the theory of contracts \cite{LP07,LP08,CGP10,BH16}. 
In particular, our compliance relation has been inspired by the orchestrated compliances proposed in  \cite{Padovani10} and \cite{BvBdL17},
where possible stuck states can be avoided by means of orchestrating processes.
In the present paper the points where an orchestrator can affect a computation are made visible in the calculus
by introducing a novel operator that we dub {\em speculative selection}. This implies 
extending the syntax of usual session types, so resulting in a session-type counterpart of
the retractable session contracts of \cite{BDLdL16}, where backtracking is taken into account
instead of orchestration.
Typable processes are shown to be free from errors.
%, as well as from errors due to the non symmetric nature of the compliance relation.
An interpretation of speculative selection as an actual language construct is then provided
in form of priority selection. 

Recursion has not been considered in the present paper, in order to focus on the
main ideas and differences with other session-types formalisms.
Adding recursion should not pose any technical difficulties and can be dealt with,
for types and orchestrators, along the lines of \cite{Padovani10} and \cite{BvBdL17}.
Recursion for processes can be treated as done, among others, in \cite{HondaVK98}.\\
Besides adding recursion, the present formalism could be extended with a notion of 
subtyping. A subtyping relation for our type system can be formalized following ideas
present in  \cite{BdL13}. It should hence be possible to define our orchestrated compliance
by composing its symmetric restriction with the subtype relation.
For what concerns the treatment of higher-order in the subtype relation, one could
take into account investigations like those carried on in \cite{BdL13,BH16b} for the session-contracts formalism.
 
An interesting extension of the present investigation is adapting 
speculative choices and orchestrated interactions in the setting of multiparty asyncronous sessions.
In case one considered the multiparty session types of \cite{HYC16},
it would be reasonable to associate an orchestrator to each channel
involved in a multiparty session. The orchestrator would hence act as an input filter for the
buffer associated to a channel. 
Of course the introduction of speculative selection in a multiparty scenario
would imply to reconsider the relationship between 
global types and local types. Actually no speculative-selection type can reasonably come out by projecting a global type. So processes (and their related types) using speculative selections can be
looked at as modules one can adapt, by means of orchestrators, in order to 
comply (precisely or at least\footnote{In such a case, clean-up reductions would turn out to
be useful also in the multiparty setting, where it is reasonable not to have any particular bias
towards one of the participants in a multiparty session.}) with the global interaction pattern represented by the global type.
From a different perspective a global type, if any, could be synthesized out of a number
of local types, similarly to what done in \cite{Lange13}. In our ``speculative'' setting, however, a global type should rather be considered as a global adaptor, a centralized orchestrator,
something similar to the {\em medium process} of \cite{Caires16} whose action would now be restricted to driving the speculative choices.

By taking into account the discussion of Remark \ref{rem:dualityuptosub}, it would be interesting 
to define a {\em complementary relation} as the composition of compliance and subtyping
and investigating whether in our setting it can be interpreted as a generalization of duality.

\paragraph{Acknownledgements}
We are very grateful to the referees for their careful reading and useful suggestions.
We also thank Mariangiola Dezani-Ciancaglini for her support.

%------------------------------------------------------------------------------
% Refs:
%
\label{sect:bib}

\bibliography{session}
%\newpage
%\appendix
%\input{ICE-orchSessions-appendix-FormSyst}

%------------------------------------------------------------------------------
% Index
%\printindex

%------------------------------------------------------------------------------
\end{document}